\newtheorem{theorem}{Theorem}
\newtheorem{definition}{Definition}
\newtheorem{lemma}{Lemma}
\newtheorem{proposition}{Proposition}
\def\ps@pprintTitle{   
\let\@oddhead\@empty   
\let\@evenhead\@empty   
\let\@oddfoot\@empty   
\let\@evenfoot\@oddfoot
}
\begin{document}
\begin{frontmatter}
\title{Unextendible product bases, bound entangled states, and the range criterion}

\author[1]{Pratapaditya Bej}
\ead{pratap6906@gmail.com}
\address[1]{Department of Physics and Centre for Astroparticle Physics and Space Science, Bose Institute,\\ EN-80, Sector V, Bidhannagar, Kolkata 700091, India}

\author[2]{Saronath Halder}
\ead{saronath.halder@gmail.com}
\address[2]{Quantum Information and Computation Group, Harish-Chandra Research Institute, HBNI,\\ Chhatnag Road, Jhunsi, Prayagraj (Allahabad) 211 019, India}

\begin{abstract}
An unextendible product basis (UPB) is a set of orthogonal product states which span a subspace of a given Hilbert space while the complementary subspace contains no product state. These product bases are useful to produce bound entangled (BE) states. In this work we consider reducible and irreducible UPBs of maximum size, which can produce BE states of minimum rank. From a reducible UPB, it is possible to eliminate one or more states locally, keeping the post-measurement states orthogonal. On the other hand, for an irreducible UPB, the above is not possible. Particularly, the UPBs of the present size are important as they might be useful to produce BE states, having ranks of the widest variety, which satisfy the range criterion. Here we talk about such BE states. We also provide other types of BE states and analyze certain properties of the states. Some of the present BE states are associated with the tile structures. Furthermore, we provide different UPBs corresponding to the present BE states of minimum rank and discuss important properties of the UPBs.   
\end{abstract}

\begin{keyword}
Unextendible product basis, Bound entanglement, Positive partial transpose, Edge state, Range criterion
\end{keyword}
\end{frontmatter}

\section{Introduction}\label{sec1}
Quantum entanglement \cite{Horodecki09-2, Guhne09} is regarded as a valuable resource for its application in quantum cryptography \cite{Ekert91}, superdense coding \cite{Bennett92}, quantum teleportation \cite{Bennett93}, etc. In these protocols, to get the maximum advantage, we often talk about the usage of pure maximally entangled states in the bipartite settings. But in a practical situation, it is never possible to avoid noise which results mixed entangled states. Therefore, one has to follow certain procedures to distill pure state entanglement from mixed entangled states \cite{Bennett96-1, Bennett96-2, Bennett96-3} by the allowed class of operations, i.e., local operations and classical communication (LOCC). We note here that it is not always possible to extract entanglement in pure form by applying distillation procedures to any mixed entangled state. This is because there are bound entangled (BE) states \cite{Horodecki97, Horodecki98}. Clearly, such entangled states are mixed states from which it is not possible to get any pure entangled state via LOCC. This is true even though many copies of a BE state are available.

We now consider that a bipartite quantum system is given and the associating Hilbert space is $\mathcal{H}$ = $\mathbb{C}^{d_1}\otimes\mathbb{C}^{d_2}$. If it is possible to compute negative eigenvalues by taking partial transpose of the state of that system then it guarantees that the state of the system is entangled (or inseparable) \cite{Peres96}. This is also a necessary condition for inseparability when $d_1d_2\leq6$ \cite{Horodecki96}. But when $d_1d_2>6$, it is not always easy to conclude if the state is separable or entangled. This is due to some bipartite entangled states which remain positive under partial transpose (PPT) \cite{Horodecki97}. It is known that if a bipartite entangled state remains PPT then the state is a BE state \cite{Horodecki98}. In this context, we mention that bipartite BE states with negative partial transpose may exist (conjectured) but it is an open problem till date \cite{DiVincenzo00, Dur00, Pankowski10}. However, in this work we restrict ourselves to the BE states which are PPT (for a multipartite system, PPT across every bipartition). Here we say these states as PPT entangled states or simply the BE states. 
   
A common technique, to detect the entanglement of a bipartite PPT entangled state, is to find a suitable indecomposable map which is positive but not completely positive, for example, the Choi map \cite{Choi75}. Applying such a map to a bipartite PPT entangled state, it is possible to compute negative eigenvalue(s) which guarantees the inseparability of the given  state. Nevertheless, for any bipartite PPT entangled state, there is no universal way to form a suitable positive but not completely positive indecomposable map which detect the entanglement of the state. Due to such a complexity, it is important to find new classes of BE states. We also remember here that starting from a positive but not completely positive indecomposable map, it is possible to constitute Hermitian operators which can witness bound entanglement \cite{Lewenstein00, Lewenstein01}. 

In Ref.~\cite{Horodecki97}, to prove the inseparability of some PPT entangled states, the range criterion was introduced. According to this criterion, for any bipartite separable state $\rho$, it is possible to find bipartite product states $\{\ket{a_i}\ket{b_i}\}_i$ which span the support of $\rho$ while the states $\{\ket{a_i}\ket{{b_i}^\ast}\}_i$ span the support of $\rho^{\mathbb{T}}$. Here $\mathbb{T}$ stands for partial transpose (PT) with respect to the second subsystem and $\ast$ stands for the complex conjugation operation in a particular basis with respect to which PT is taken. If the state of a given quantum system, violates the range criterion, then the state is an entangled state. Obviously, for any Hilbert space the full-rank states always satisfy the range criterion. Interestingly, there also exist low-rank PPT entangled states which can satisfy the range criterion \cite{Bandyopadhyay05, Halder19-1}. For PPT entangled states, it is anyway difficult to detect the entanglement. Increasing this complexity, we now have some low-rank PPT entangled states which can satisfy the range criterion. Clearly, detection of entanglement within these states, is a very difficult task. Hence, it is important to explore different techniques to constitute such states having the lowest rank to full rank.

BE states can be constructed efficiently by using the notion of unextendible product basis (UPB) which was introduced in Ref.~\cite{Bennett99}. For a given Hilbert space $\mathcal{H}$ = $\mathbb{C}^{d_1}\otimes\mathbb{C}^{d_2}\otimes\cdots\otimes\mathbb{C}^{d_m}$, we consider a set $\mathcal{S}$ of pure orthogonal product states $\{\ket{\theta_i}\}_{i=1}^N$ (here the states are fully separable). Suppose, these states span a subspace $\mathcal{H}_S$ of the Hilbert space $\mathcal{H}$. Now, the complementary subspace $\mathcal{H}_E$ is a completely entangled subspace, i.e., it does not contain any product state if the states of the set $\mathcal{S}$ form an unextendible product basis.\footnote{In a given Hilbert space, for a complete orthogonal product basis also, it is not possible to find another product state which is orthogonal to all the states of the considered basis. But in this case the dimension of the completely entangled subspace is zero which is not desired here. Therefore, we exclude those bases from the set of unextendible product bases.} The BE state due to the UPB is given as $\varrho$ = $(1/N^\prime)(\mathbb{I}-\sum_{i=1}^N\ket{\theta_i}\bra{\theta_i})$, $N^\prime$ is the dimension of $\mathcal{H}_E$ and $\mathbb{I}$ is the identity operator acting on $\mathcal{H}$. It is known that for a bipartite system if one of the subsystems is of dimension two then it is not possible to construct any UPB in the associating Hilbert space \cite{Bennett99, Divincenzo03}. 

UPBs are studied not only for PPT entangled states. It is also known that the states within a UPB cannot be perfectly distinguished by LOCC \cite{Bennett99, Divincenzo03, Rinaldis04}. Therefore, UPBs show the phenomenon--quantum nonlocality without entanglement \cite{Bennett99-1}. Again, the UPB generated PPT entangled states are supported in the completely entangled subspaces (normalized projector onto the complementary subspace $\mathcal{H}_E$). This implies that the UPB generated PPT entangled states are the so-called edge states\footnote{For a bipartite system, an edge state is a PPT entangled state from which no separable projector of rank-$1$ can be subtracted (with a small proportion) preserving both, positivity and PPT \cite{Lewenstein00, Lewenstein01}.} which violate the range criterion in an extreme manner \cite{Lewenstein00, Lewenstein01}. Now, starting from these edge states, sometimes it is possible to construct PPT entangled states which satisfy the range criterion (for bipartite systems see Refs.~\cite{Bandyopadhyay05, Halder19-1}). 

The present work focuses on the UPBs of maximum size which can produce BE states of minimum rank. In fact, there are bipartite BE states of minimum rank, which can be produced from tile structures. We also discuss about the (in)distinguishability properties of UPBs under LOCC. There are many articles where the authors have discussed about the minimum cardinality\footnote{The cardinality (or the size) of a UPB is nothing but the number of states present in a UPB.} of UPBs (see Ref.~\cite{Chen15} and the references therein). This is important as it gives the idea of small sets of LOCC indistinguishable product states. But the UPBs of maximum size are also important as they help to generate a class of noisy bound entangled states having ranks of the widest variety. These states are noisy, in a sense that, such a state can be written as a convex combination of a separable state (noise) and an edge state. Here the separable states are produced by taking the convex combination of the pure orthogonal product states from the given UPB. Furthermore, the noisy BE states may satisfy the range criterion. 

The study of noisy bound entangled states also gives insight regarding the robustness of entanglement within the edge states. In this context, we mention that in Ref.~\cite{Sentis18}, the authors have presented a class of BE states, entanglement within which is robust against noise and these states are fit for experimental verification. For any BE state, it is difficult to say how to use the state as resource in quantum information processing protocols. However, in certain scenarios, the researchers have shown the usage of particular BE states, for example, secure key distillation \cite{Horodecki05, Horodecki08, Horodecki09-1}, quantum steering \cite{Moroder14}, quantum nonlocality \cite{Vertesi14, Yu17}, quantum metrology \cite{Toth18}, etc. These examples clearly depict the importance of studying BE states.

We now provide the main contributions of the present work: (i) We discuss about a class of UPBs of maximum size, the states of which are with real normalizing coefficients. Such a UPB has the property that if the stopper is removed then the rest of the UPB can be extended to a complete orthogonal product basis. We show that these UPBs can be applied to produce noisy bound entangled states, having ranks of the widest variety, which satisfy the range criterion. This is true for both bipartite and multipartite systems. (ii) We construct both reducible and irreducible UPBs of maximum size in this work. We prove that in $\mathbb{C}^3\otimes\mathbb{C}^3$ and in $\mathbb{C}^2\otimes\mathbb{C}^2\otimes\mathbb{C}^2$, all UPBs are irreducible. Moreover, we also prove that in $\mathbb{C}^{d_1}\otimes\mathbb{C}^{d_2}$, $d_1, d_2 > 9$, $d_1, d_2 \geq 3$, for any rank-4 BE state, it is possible to consider that the state is due to a reducible UPB. (iii) We show that looking at an edge state, it may not be possible to conclude if the edge state is due to a reducible UPB or due to an irreducible UPB. This is true for both bipartite and multipartite systems (except $\mathbb{C}^3\otimes\mathbb{C}^3$ and $\mathbb{C}^2\otimes\mathbb{C}^2\otimes\mathbb{C}^2$ systems). We also characterize two-qutrit noisy bound entangled states. (iv) We provide a class of UPBs of maximum size for any bipartite systems. The states of which are with real coefficients. Moreover, such a UPB has the property that if the stopper state is removed then the rest of the UPB can be extended to a complete orthogonal product basis. This type of UPBs can be applied to produce noisy bound entangled states, having ranks of the widest variety, which satisfy the range criterion. We also show that these UPBs are reducible UPBs (except when the bipartite system is a $\mathbb{C}^3\otimes\mathbb{C}^3$ system). (v) For multipartite systems, we further construct noisy bound entangled states which satisfy the range criterion. Such states can have rank ranging from five to full rank. 

The rest of the paper is arranged in the following way: In Sec.~\ref{sec2}, we provide a few important concepts. The results for the bipartite systems are given in Sec.~\ref{sec3} and the same for the multipartite systems are given in Sec.~\ref{sec4}. Finally, the conclusion is drawn in Sec.~\ref{sec5}.

\section{Preliminaries}\label{sec2}
In this section, we present a few important concepts. These concepts are briefly introduced in the introduction. However, here we provide those concepts in a greater details as they are frequently used in the rest of the paper. 

\begin{definition}
(Unextendible product basis \cite{Bennett99, Divincenzo03}) Given a set of pure orthogonal product states, if these states span only a subspace of the whole Hilbert space, in such a way that the complementary subspace contains no product state, then the set is an unextendible product basis.
\end{definition}

Here we consider both reducible and irreducible UPBs of maximum size. Definitions of which can be given as the following. But before we give the definitions, it is important to understand the orthogonality-preserving LOCC. Suppose, a set of orthogonal states are provided to distinguish. The process of discrimination can be a multi-round protocol. If after each round, the remaining states, to be distinguished, remain pairwise orthogonal to each other then such a LOCC protocol is orthogonality preserving.  

\begin{definition}
(Reducible and irreducible UPB) Given a UPB, if it is not possible to eliminate any state from the UPB via orthogonality-preserving LOCC then the UPB is an irreducible UPB. Again, if it is possible to eliminate one or more states from the UPB via orthogonality-preserving LOCC then the UPB is a reducible UPB. 
\end{definition}

We note here that in case of a multipartite system, the product states within a UPB must be separable across every bipartition. The incomplete product bases of the above kind (the UPBs), are useful to produce bound entangled states which are positive under partial transpose. In a multipartite system such incomplete product bases produce bound entangled states which are positive under partial transpose across every bipartition.

\begin{definition}
(Edge state \cite{Lewenstein00, Lewenstein01}) A bipartite edge state $\delta_{edge}$, is positive under partial transpose entangled state, such that there exists no product state $\ket{a}\ket{b}$ and $\epsilon>0$, for which $\delta-\epsilon\ket{a}\bra{a}\otimes\ket{b}\bra{b}$ is a positive operator and it is also positive under partial transpose.
\end{definition}

In this work, we particularly, use the edge states which are due to UPBs. These states are supported in the entangled subspaces and they violate the range criterion in an extreme manner. The multipartite edge states which are due to UPBs, are supported in the completely entangled subspaces. Notice that to define a multipartite edge state, we have to use a pure state which is separable across every bipartition in the above definition.

\begin{definition}
(Noisy bound entangled states \cite{Halder19-1}) A bound entangled state which is positive under partial transpose, is said to be noisy if it can be written as a convex combination of an edge state and a separable state (noise).
\end{definition}

In the present work we mostly mention the real UPBs, the states of which are normalized with real coefficients, in order to construct the noisy bound entangled states. Thus, the noisy bound entangled states or the edge states which are considered in the paper are mostly invariant under partial transpose (see Refs.~\cite{Bandyopadhyay05, Halder19-1}). In fact, the noisy bound entangled states are produced here by using particular separable states, which are generated by taking the convex combinations of different states, picked from a UPB.

\begin{lemma}
(The range Criterion \cite{Horodecki97}) Given a bipartite separable state $\rho$, it is always possible to find bipartite product states $\{\ket{a_i}\ket{b_i}\}_i$ which span the support of $\rho$ while the states $\{\ket{a_i}\ket{{b_i}^\ast}\}_i$ span the support of $\rho^{\mathbb{T}}$. Here $\mathbb{T}$ stands for partial transpose with respect to the second subsystem and $\ast$ stands for the complex conjugation operation in a particular basis with respect to which partial transpose is taken. This is known as the range criterion. If the state of a given quantum system, violates the range criterion, then it is guaranteed that the state is an entangled state.
\end{lemma}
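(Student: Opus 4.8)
The plan is to start from the defining property of separability: a bipartite state $\rho$ on $\mathbb{C}^{d_1}\otimes\mathbb{C}^{d_2}$ is separable precisely when it admits a decomposition $\rho = \sum_i p_i\, \ket{a_i}\bra{a_i}\otimes\ket{b_i}\bra{b_i}$ with $p_i>0$ and $\sum_i p_i = 1$. I would fix one such decomposition; the product vectors $\{\ket{a_i}\ket{b_i}\}_i$ are the natural candidates for the spanning family demanded by the statement, so the task reduces to showing (a) that these vectors span the support of $\rho$ and (b) that their partially conjugated versions span the support of $\rho^{\mathbb{T}}$.

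For step (a) I would invoke the elementary fact that, for a positive semidefinite operator written as $A = \sum_i p_i \ketbra{\psi_i}{\psi_i}$ with every $p_i > 0$, the range of $A$ equals $\mathrm{span}\{\ket{\psi_i}\}$. The inclusion $\mathrm{range}(A)\subseteq\mathrm{span}\{\ket{\psi_i}\}$ is immediate since $A\ket{\phi}=\sum_i p_i\braket{\psi_i|\phi}\ket{\psi_i}$. For the reverse inclusion I would argue by orthogonal complements: if $\ket{v}$ is orthogonal to $\mathrm{range}(A)$ then $\bra{v}A\ket{v}=\sum_i p_i|\braket{\psi_i|v}|^2 = 0$, and because every $p_i>0$ this forces $\braket{\psi_i|v}=0$ for all $i$; hence $\ker(A)\subseteq\{\ket{\psi_i}\}^\perp$, which upon taking orthogonal complements delivers the missing inclusion. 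Applying this to $\rho$ gives $\mathrm{supp}(\rho)=\mathrm{span}\{\ket{a_i}\ket{b_i}\}_i$.

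For step (b) the key is to track what partial transposition does to each term. Fixing the computational basis $\{\ket{j}\}$ of the second system with respect to which $\mathbb{T}$ is defined and writing $\ket{b_i}=\sum_j c^{(i)}_j\ket{j}$, a direct index relabelling shows $\big(\ketbra{b_i}{b_i}\big)^{\mathbb{T}}=\ketbra{b_i^\ast}{b_i^\ast}$, where $\ket{b_i^\ast}:=\sum_j (c^{(i)}_j)^\ast\ket{j}$. Since the first factor is untouched, $\rho^{\mathbb{T}}=\sum_i p_i\,\ketbra{a_i}{a_i}\otimes\ketbra{b_i^\ast}{b_i^\ast}$ is again positive semidefinite of the same form, so step (a) applies verbatim and yields $\mathrm{supp}(\rho^{\mathbb{T}})=\mathrm{span}\{\ket{a_i}\ket{b_i^\ast}\}_i$. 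This is exactly the asserted structure, and the final sentence of the lemma is just the contrapositive: if no single family of product vectors can simultaneously span $\mathrm{supp}(\rho)$ and, after conjugation of the second factor, $\mathrm{supp}(\rho^{\mathbb{T}})$, then $\rho$ cannot be separable and is therefore entangled.

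I do not expect a serious obstacle, since the result is classical; the only point demanding care is the positivity argument in step (a), where $p_i>0$ is used crucially (the claim would fail for indefinite operators), together with the bookkeeping of complex conjugation under $\mathbb{T}$ in step (b). A secondary subtlety worth flagging is that the separable decomposition is highly non-unique, but the statement asserts only the \emph{existence} of a suitable spanning family, so fixing any one decomposition suffices and no uniqueness argument is needed.
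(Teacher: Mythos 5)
Your proof is correct. Note, however, that the paper does not prove this lemma at all: it is stated in the Preliminaries as a quoted classical result of Horodecki (Ref.~\cite{Horodecki97}), so there is no in-paper argument to compare against. Your two steps --- that the range of $\sum_i p_i\ketbra{\psi_i}{\psi_i}$ with $p_i>0$ equals $\mathrm{span}\{\ket{\psi_i}\}$ (via $\ker A=\mathrm{range}(A)^\perp$ for Hermitian $A$), and that $(\ketbra{b_i}{b_i})^{\mathbb{T}}=\ketbra{b_i^\ast}{b_i^\ast}$ in the transposition basis --- are exactly the standard argument from the original reference, and you correctly flag the two points where care is needed ($p_i>0$ and the basis-dependence of $\ast$); the only implicit ingredient worth making explicit is that in finite dimensions a separable state always admits a \emph{finite} such decomposition, so the span arguments are unproblematic.
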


Remember that satisfaction of the range criterion is a necessary condition for separability while a violation of this criterion is a sufficient condition for inseparability (entanglement).

\section{Bipartite systems}\label{sec3}
Given a bipartite quantum system, we suppose that the associating Hilbert space is $\mathcal{H}$ = $\mathbb{C}^{d_1}\otimes\mathbb{C}^{d_2}$; $d_1,d_2\geq3$. As of now, we use $d_1\otimes d_2$ instead of $\mathbb{C}^{d_1}\otimes\mathbb{C}^{d_2}$. We then think of a UPB, the states of which span a subspace $\mathcal{H}_S$ of the whole Hilbert space $\mathcal{H}$, where $\dim{\mathcal{H}_S} < d_1d_2$. It is known that the normalized projector onto the subspace $\mathcal{H}_E$ (which is an orthogonal subspace to $\mathcal{H}_S$), is a BE state \cite{Bennett99}. The dimension of $\mathcal{H}_E$ is ($d_1d_2-\dim{\mathcal{H}_S}$) and it is an entangled subspace. The minimum nonzero dimension of $\mathcal{H}_E$ in this case can be four. This is due to the fact that the minimum rank of a BE state is four. We mention here that in Ref.~\cite{Horodecki03-1}, it was shown that rank-$2$ bipartite BE state does not exist and later in Ref.~\cite{Chen08}, it was shown that rank-$3$ bipartite entangled states are distillable. Evidently, the maximum size of a UPB is ($d_1d_2-4$), which can produce BE states of minimum rank. The UPBs of the above size may have a very interesting application: They generate the edge states of rank-$4$ and starting from these edge states it may possible to construct noisy BE states, having rank-$5$ to full rank, which satisfy the range criterion. In fact, this is the widest variety of ranks, a noisy BE state can have, because the BE states of minimum rank (rank-$4$) are all edge states \cite{Chen11, Chen13}. Nevertheless, to realize the construction, we use Theorem $1$ of Ref.~\cite{Halder19-1} and apply it on the real UPBs (the states of which are with real normalizing coefficients) of size ($d_1d_2-4$). If one considers real UPBs then the constructions become easier. This is due to the following facts: (i) the edge states or the noisy BE states which are produced due to real UPBs, are invariant under PT, (ii) so, to prove that a noisy BE state satisfy the range criterion, it is sufficient to show that there exist sufficient pure product states in the range of the considered BE state and these product states span the whole range. We consider, these UPBs also have the property that if the stopper state (or simply the stopper) is removed from a UPB, then the rest of the set can be extended to a full basis by adding other pairwise orthogonal product states. Hence, by taking different convex combinations of these UPB generated edge states and the stopper states, one can get noisy BE states which satisfy the range criterion \cite{Halder19-1}. More precisely, using Theorem $1$ of Ref.~\cite{Halder19-1}, from the real UPBs of the above kind, we can construct noisy BE states of rank-$(d_1d_2-N+1)$ to full rank, which satisfy the range criterion, where $N$ is the size of the real UPBs. Here $N$ is equal to $(d_1d_2-4)$, which implies the existence of noisy BE states of rank-$5$ to full rank, which satisfy the range criterion. Next, we summarize the above discussion into a proposition:

\begin{proposition}\label{prop1}
Consider a real UPB of size ($d_1d_2-4$) in $d_1\otimes d_2$ and assume that if the stopper is removed then the remaining states of the UPB, can be extended to a full basis. Such a UPB is useful to produce noisy BE states, having ranks of the widest variety, which satisfy the range criterion.
\end{proposition}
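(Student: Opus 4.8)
The plan is to reduce the statement to a direct application of Theorem~1 of Ref.~\cite{Halder19-1}, after identifying the relevant edge state and exploiting the reality of the UPB. First I would form the rank-$4$ edge state $\varrho_{edge} = \frac{1}{4}\big(\mathbb{I} - \sum_{i=1}^{N}\ketbra{\theta_i}{\theta_i}\big)$ associated with the given UPB $\{\ket{\theta_i}\}_{i=1}^{N}$ of size $N = d_1d_2-4$; its support is the four-dimensional entangled subspace $\mathcal{H}_E$. Because every $\ket{\theta_i}$ has real normalizing coefficients, $\varrho_{edge}$ is invariant under partial transpose, i.e. $\varrho_{edge} = \varrho_{edge}^{\mathbb{T}}$, and the same will hold for every convex combination I build from $\varrho_{edge}$ together with projectors onto the (real) product states of the UPB and its stopper. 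This invariance is what makes the range criterion tractable: by Lemma~1, since $\rho=\rho^{\mathbb{T}}$ and the product vectors have real second factors (so $\ket{b_i^\ast}=\ket{b_i}$), it suffices to exhibit product states spanning the range of $\rho$, the corresponding condition on $\rho^{\mathbb{T}}$ then being automatic.

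Next I would construct the family of noisy states explicitly. Writing $\ket{\sigma}$ for the stopper, the hypothesis that $\{\ket{\theta_i}\}_{i\neq\sigma}$ extends to a complete orthogonal product basis supplies product states $\ket{\eta_1},\dots,\ket{\eta_5}$ completing it; a dimension count ($d_1d_2-(N-1)=5$) shows these five product vectors span $\mathcal{H}_E \oplus \mathrm{span}\{\ket{\sigma}\}$. For each target rank $r$ with $5\le r\le d_1d_2$ I would set
\begin{equation}
\rho_r \;=\; \lambda\,\varrho_{edge} \;+\; (1-\lambda)\,\rho^{(r)}_{\mathrm{sep}},
\end{equation}
where $\rho^{(r)}_{\mathrm{sep}}$ is a separable (noise) state obtained by mixing the projector onto $\ket{\sigma}$ with those onto $r-5$ further UPB states. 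Each $\rho_r$ is then PPT, being a convex combination of the PPT edge state and a separable state. Its range is $\mathcal{H}_E$ enlarged by the added product directions, so it is spanned by the $\ket{\eta_k}$ together with the UPB product states that were mixed in; hence $\rho_r$ satisfies the range criterion by the reduction above. Letting $r$ run from $5$ (mix in only the stopper) up to $d_1d_2$ (mix in the stopper and all remaining UPB states) realizes every rank in this interval, which is exactly the content of Theorem~1 of Ref.~\cite{Halder19-1} specialized to $N=d_1d_2-4$, namely rank $(d_1d_2-N+1)=5$ to full rank.

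Finally I would justify that this is the \emph{widest} variety of ranks. Since rank-$2$ bipartite BE states do not exist and rank-$3$ bipartite PPT entangled states are distillable, the minimum rank of a BE state is four; and every rank-$4$ BE state is an edge state \cite{Chen11, Chen13}, so it violates the range criterion in an extreme manner and cannot be among the states we want. Thus rank $5$ is the smallest possible rank of a BE state satisfying the range criterion, and rank $5$ to full rank is indeed the widest variety attainable.

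The main obstacle I expect is not the PPT property or the rank bookkeeping, both of which are routine, but establishing that each $\rho_r$ is genuinely \emph{entangled} rather than merely PPT while simultaneously satisfying the range criterion. The delicate point is a rigidity statement: the only product vectors lying in $\mathcal{H}_E\oplus\mathrm{span}\{\ket{\sigma}\}$ must be the $\ket{\eta_k}$ (up to scalars), which forces any separable decomposition of $\rho_r$ to be diagonal in these vectors and thereby contradicts the non-diagonal structure inherited from $\varrho_{edge}=\frac{1}{4}\Pi_{\mathcal{H}_E}$. This rigidity is precisely what Theorem~1 of Ref.~\cite{Halder19-1} provides, so the crux of the work is verifying that the present UPBs meet its hypotheses exactly.
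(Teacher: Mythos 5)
Your proposal follows essentially the same route as the paper: it reduces the claim to Theorem~1 of Ref.~\cite{Halder19-1}, uses the reality of the UPB to get invariance under partial transpose so that the range criterion reduces to exhibiting product states spanning the range, and uses the stopper-removal extendibility to supply those product states, with the rank count $(d_1d_2-N+1)=5$ up to full rank. The only difference is that you spell out the dimension count and the explicit family $\rho_r$ in more detail than the paper, which merely summarizes the surrounding discussion into the proposition; your argument is correct.
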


It is important to remember that in $3\otimes3$, the stopper state is not special. If any state from a $3\otimes3$ UPB is removed then the rest of the UPB can be extended to a complete orthogonal product basis. This is due to two facts: (i) in $3\otimes3$ only one size of UPB is possible and that is five \cite{Divincenzo03}, (ii) in bipartite systems, any four orthogonal states can always be extended to a complete orthogonal product basis \cite{Divincenzo03}. However, it may not be true for all the UPBs, constructed in the present work. This is why we have used the feature if the stopper is removed then the rest of the UPB can be extended to a complete orthogonal product basis.

The $3\otimes3$ Hilbert space is a special case as there is no reducible UPB in this Hilbert space. It is also important to note here that in $3\otimes3$, in the entangled subspace due to a UPB, there is only one BE state while it is possible to get more than one BE states in the entangled subspace due to a UPB in $d\otimes d$, $d$ is odd \cite{Halder19-3}. Remember that the UPBs, which are constructed in Ref.~\cite{Halder19-3}, are used to present a new class of bound entangled states which are the boundary points of the convex compact set of states with positive partial transpose. But those bound entangled states are supported in the entangled subspaces, and therefore, those states violate the range criterion in an extreme manner. So, this is completely opposite to the motivation of finding noisy bound entangled states, having ranks of the widest variety, which satisfy the range criterion. In the present paper, we have given an algorithm to generate some rank-4 bound entangled states which are supported in the entangled subspaces (the algorithm is given in a later portion). But the present rank-$4$ bound entangled states are due to reducible UPBs of maximum size, moreover, the states of the UPBs are normalized with real coefficients. These UPBs also have the property that if the stopper is removed then the rest of the UPB is extendible to a complete orthogonal product basis. As a result of which using the present rank-$4$ bound entangled states, it is possible to construct noisy bound entangled states, having ranks of the widest variety, which satisfy the range criterion. Also note that the UPBs, which are constructed in Ref.~\cite{Halder19-3}, are not of maximum size, in fact, they may not be reducible UPBs. Moreover, the states of the UPBs of Ref.~\cite{Halder19-3}, may contain complex normalizing coefficients. So, it is not known how to generate range criterion satisfying bound entangled states, having ranks of the widest variety, from the UPBs of Ref.~\cite{Halder19-3}. However, we now present the following proposition:

\begin{proposition}\label{prop2}
In $3\otimes3$, all rank-$4$ BE states are due to irreducible UPBs.
\end{proposition}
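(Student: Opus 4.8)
The plan is to split the claim into two independent facts and then combine them. Fix a rank-$4$ BE state $\varrho$ in $3\otimes3$; its support $R(\varrho)$ has dimension $4$ and its kernel $K=\ker\varrho$ has dimension $5$. I would show, first, that $K$ is spanned by a UPB of size $5$ --- so that $\varrho$ is literally supported on the completely entangled subspace complementary to a UPB, i.e.\ is ``due to a UPB'' --- and, second, that every UPB in $3\otimes3$ is irreducible, which is the fact recorded just above (there is no reducible UPB in $3\otimes3$, since the unique admissible UPB size is $5$ and no state can be eliminated from such a set by orthogonality-preserving LOCC). Granting both, $\varrho$ is supported on the complement of a size-$5$ UPB that admits no orthogonality-preserving reduction, which is exactly the assertion.

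For the first fact I would start from the cited results \cite{Chen11, Chen13}: every rank-$4$ BE state is an edge state, so $R(\varrho)$ is completely entangled and contains no product vector. It then remains to exhibit the UPB inside $K$. A dimension count already makes this plausible: the product vectors form the Segre variety $\mathbb{P}^2\times\mathbb{P}^2\subset\mathbb{P}^8$, of dimension $4$ and degree $6$, so the $\mathbb{P}^3$ attached to the $4$-dimensional range generically avoids it (matching the completely entangled range), while the $\mathbb{P}^4$ attached to $K$ generically meets it in finitely many points. The substantive work is to upgrade this to orthogonality: I would use the PPT hypothesis, which links $\ker\varrho$ with $\ker\varrho^{\mathbb{T}}$, to force the product vectors lying in $K$ to be mutually orthogonal and to span $K$, thereby producing a genuine (orthogonal) UPB rather than a merely unextendible, possibly non-orthogonal, product set. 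Equivalently, I would invoke the classification of rank-$4$ two-qutrit PPT states, whose upshot is precisely that such a state is entangled if and only if its kernel is the span of a size-$5$ UPB.

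With the UPB in hand, the proof closes by appealing to the irreducibility of all $3\otimes3$ UPBs: since no state can be removed from it by orthogonality-preserving LOCC, the UPB is irreducible, and hence $\varrho$ is a rank-$4$ BE state due to an irreducible UPB. I expect the genuine obstacle to be the orthogonality step in the previous paragraph --- establishing that the product vectors in the five-dimensional kernel really assemble into an orthogonal UPB is exactly where the full strength of the PPT/edge-state hypothesis is needed, and it is the only place the argument goes beyond bookkeeping about dimensions together with the already-recorded irreducibility of $3\otimes3$ UPBs.
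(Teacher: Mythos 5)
Your decomposition is the same as the paper's: (a) every rank-$4$ BE state in $3\otimes3$ is due to a (size-$5$) UPB, and (b) every UPB in $3\otimes3$ is irreducible. Part (a) is handled essentially as in the paper --- the algebraic-geometry plausibility sketch (Segre variety, dimension counts) is harmless but ultimately unnecessary, since you, like the paper, close that step by invoking the classification of rank-$4$ two-qutrit PPT entangled states as UPB-generated \cite{Chen11, Chen13}. That half is fine.

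The genuine gap is in part (b). You never prove irreducibility; you assert it, writing that it is ``the fact recorded just above'' and justifying it parenthetically by ``no state can be eliminated from such a set by orthogonality-preserving LOCC'' --- which is not a reason but a restatement of the definition of irreducibility, so the justification is circular. Moreover, the sentence in the paper preceding the proposition (``there is no reducible UPB in this Hilbert space'') is not an independent prior result you may lean on: it is a forward summary whose justification is precisely the proof of this proposition. The missing idea is the contradiction argument the paper supplies: since all $3\otimes3$ UPBs have cardinality five \cite{Divincenzo03}, eliminating even one state by orthogonality-preserving LOCC would leave four or fewer pairwise orthogonal product states, and any such set in a bipartite system can be perfectly distinguished by LOCC \cite{Divincenzo03}; concatenating the elimination round with that discrimination protocol would make the full UPB perfectly LOCC-distinguishable, contradicting the known LOCC-indistinguishability of UPB members \cite{Bennett99, Divincenzo03, Rinaldis04}. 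Without this (or some substitute) argument, your proof of (b) --- and hence of the proposition --- is incomplete.
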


\begin{proof}
It is known that in $3\otimes 3$, all rank-$4$ BE states are due to UPBs \cite{Chen11}. Again, there is no reducible UPB in $3\otimes 3$. This is because of the following facts: In $3\otimes3$, there are UPBs of cardinality five (maximum cardinality in this case) only \cite{Divincenzo03}. Now, if it is possible to eliminate one or more states from such a UPB by orthogonality-preserving LOCC then the remaining states (four orthogonal product states or less) can always be distinguished by LOCC \cite{Divincenzo03}. Thus, a UPB is being distinguishable by LOCC but it is known to be impossible \cite{Bennett99, Divincenzo03, Rinaldis04}. In this way, all rank-$4$ BE states in $3\otimes3$ are due to irreducible UPBs.
\end{proof}

Apart from $3\otimes3$, in all other bipartite Hilbert spaces $d_1\otimes d_2$, $d_1d_2>9$, $d_1, d_2\geq3$, there are reducible UPBs. This can be realized very easily: We consider the Tiles UPB in $3\otimes3$ \cite{Bennett99}, the states of which are given by- 
\begin{equation}\label{eq1}
\begin{array}{c}
\ket{0}\ket{0-1},~\ket{0-1}\ket{2},~\ket{2}\ket{1-2},~\ket{1-2}\ket{0},~\ket{0+1+2}\ket{0+1+2},
\end{array}
\end{equation}
where $\ket{c_1\pm c_2\pm\cdots\pm c_n}\equiv (\ket{c_1}\pm\ket{c_2}\pm\cdots\pm\ket{c_n})$, ignoring the normalizing coefficients. [We use such notations in many places of this paper.] Now, starting from the above UPB one can keep adding simple pure orthogonal product states to get new UPBs in higher dimensions. For example, one can add the states $\ket{0}\ket{3}$, $\ket{1}\ket{3}$, $\ket{2}\ket{3}$ with the above UPB to get a new UPB in $3\otimes4$. But notice that from the new UPB it is possible to eliminate the states $\ket{0}\ket{3}$, $\ket{1}\ket{3}$, $\ket{2}\ket{3}$ via orthogonality-preserving LOCC. For such reducible UPBs, we present the following proposition:

\begin{proposition}\label{prop3}
In $d_1\otimes d_2$, $d_1d_2>9$, $d_1, d_2\geq3$, for any rank-$4$ BE state, it is possible to consider that the state is due to a reducible UPB. 
\end{proposition}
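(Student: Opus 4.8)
The plan is to reduce the general bipartite case to the two-qutrit case settled in Proposition~\ref{prop2} and then to enlarge the resulting five-state UPB into a manifestly reducible maximum-size UPB in the larger space. First I would pin down the local structure of the rank-$4$ BE state $\rho$. Writing $r_A=\mathrm{rank}(\rho_A)$ and $r_B=\mathrm{rank}(\rho_B)$ for the local ranks, both must be at least three, since a BE state cannot exist when one party has dimension two (a $2\otimes n$ PPT state is separable). On the other hand, it is known that a PPT state whose global rank does not exceed the larger of its two local ranks is separable; as $\rho$ is PPT and entangled, this forces $4=\mathrm{rank}(\rho)>\max(r_A,r_B)$, hence $r_A,r_B\le 3$. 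Together these give $r_A=r_B=3$, so the support of $\rho$ sits inside a subspace $A_0\otimes B_0\cong\mathbb{C}^3\otimes\mathbb{C}^3$ determined by the three-dimensional local supports $A_0\subseteq\mathbb{C}^{d_1}$ and $B_0\subseteq\mathbb{C}^{d_2}$. I expect this reduction to carry the real content of the argument.

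Regarded inside $A_0\otimes B_0$, $\rho$ is a rank-$4$ two-qutrit BE state, so by the fact invoked in the proof of Proposition~\ref{prop2} its four-dimensional support is completely entangled and the five-dimensional complement within $A_0\otimes B_0$ is spanned by a UPB $\mathcal{U}$. I would then extend $\mathcal{U}$ to the whole space: the orthogonal complement of the support of $\rho$ in $\mathbb{C}^{d_1}\otimes\mathbb{C}^{d_2}$ is $\mathcal{U}$ together with the three blocks $A_0\otimes B_0^{\perp}$, $A_0^{\perp}\otimes B_0$ and $A_0^{\perp}\otimes B_0^{\perp}$, each of which has an orthonormal product basis of states $\ket{i}\ket{j}$. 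Taking $\mathcal{S}$ to be the union of $\mathcal{U}$ with these product ``fillers'' yields $d_1d_2-4$ mutually orthogonal product states spanning the complement of the support of $\rho$. Any product state lying in that support would lie in $A_0\otimes B_0$ and be orthogonal to $\mathcal{U}$, contradicting the unextendibility of $\mathcal{U}$; hence the complement of $\mathrm{span}(\mathcal{S})$ contains no product state, and $\mathcal{S}$ is a UPB of maximum size reproducing $\rho$.

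Finally I would verify reducibility, exactly as in the $3\otimes4$ example preceding the statement. Since $d_1d_2>9$ with $d_1,d_2\ge 3$, at least one of $d_1>3$, $d_2>3$ holds; say $d_1>3$, so $A_0^{\perp}\neq 0$ and the blocks $A_0^{\perp}\otimes B_0$ and $A_0^{\perp}\otimes B_0^{\perp}$ supply genuine fillers. Let Alice perform the orthogonality-preserving local measurement $\{\Pi_{A_0},\Pi_{A_0^{\perp}}\}$, where $\Pi_{A_0}$ projects onto $A_0$. Every state of $\mathcal{S}$ has its Alice part entirely in $A_0$ or entirely in $A_0^{\perp}$, so either outcome keeps the surviving states pairwise orthogonal, and the outcome $\Pi_{A_0}$ eliminates all fillers in $A_0^{\perp}\otimes B_0$ and $A_0^{\perp}\otimes B_0^{\perp}$. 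Thus one or more states are removed by orthogonality-preserving LOCC (use Bob's measurement if instead $d_1=3<d_2$), so $\mathcal{S}$ is reducible. The only delicate step is the local-rank reduction in the first paragraph; the extension and the reducibility check are routine.
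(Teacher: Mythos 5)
Your proposal is correct and follows essentially the same route as the paper's proof: reduce to the fact that a rank-$4$ BE state is supported in a $3\otimes3$ subspace, invoke the result that two-qutrit rank-$4$ BE states arise from UPBs, and then pad that five-state UPB with product states from the complementary blocks that can be stripped off by an orthogonality-preserving local projection. The extra detail you supply---the explicit check that the padded set is unextendible and the explicit $\{\Pi_{A_0},\Pi_{A_0^{\perp}}\}$ measurement witnessing reducibility---is welcome, since the paper leaves both implicit. One caveat on your first paragraph: the parenthetical justification ``a $2\otimes n$ PPT state is separable'' is false for $n\ge 4$ (the original Horodecki construction is a $2\otimes4$ PPT entangled state), so as written the lower bound $r_A,r_B\ge 3$ is not justified. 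The fix is just to reorder: first use the low-rank separability criterion to get $r_A,r_B\le 3$; then a state with $r_A=2$ would be supported in $2\otimes2$ or $2\otimes3$, where PPT does imply separability, forcing $r_A=r_B=3$. The paper avoids this entirely by citing \cite{Chen13} for the $3\otimes3$-support statement, so your version is more self-contained but needs that small repair.
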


\begin{proof}
All rank-$4$ BE states are supported in $3\otimes3$ subspaces \cite{Chen13}. Moreover, two-qutrit rank-$4$ BE states are due to UPBs \cite{Chen11}. So, such a state can be considered due to a UPB of that subspace. If it is the case then the remaining dimension $(d_1d_2-9)$ can be filled with simple product states which can be eliminated by orthogonality-preserving LOCC. Thus, we arrive to the above proposition.
\end{proof}

We now move to the discussion of UPBs and BE states in $3\otimes 4$. We have provided a tile structure in Fig.~\ref{fig1}. From the tile structure, it is possible to construct a BE state which is supported in $3\otimes3$. This can be done without considering the UPB. We first think of a complete orthogonal product basis (COPB), from which a UPB can be produced. In order to construct the UPB, one can think about the stopper state $\ket{0+1+2}\ket{0+1+2+3}$. Obviously, the stopper state is not orthogonal to certain states of the COPB. Those states of the COPB can be found from the tile structure of Fig.~\ref{fig1}. Following Ref.~\cite{Halder19-3}, we say the states as `missing states'. These missing states are given by-
\begin{equation}\label{eq2}
\begin{array}{c}
\ket{0}\ket{0+1+2},~\ket{0+1}\ket{3},~\ket{2}\ket{1+2+3},~\ket{1+2}\ket{0},~\ket{1}\ket{1+2}.
\end{array}
\end{equation}
Taking the linear combination of the above five states, it is possible to construct four entangled states which are orthogonal to the stopper state. For example, consider the states (not normalized): $\ket{0}\ket{0+1+2}-\ket{2}\ket{1+2+3}$, $\ket{0+1}\ket{3}-\ket{1+2}\ket{0}$, $2(\ket{0}\ket{0+1+2}+\ket{2}\ket{1+2+3})-3(\ket{0+1}\ket{3}+\ket{1+2}\ket{0})$, $\ket{0}\ket{0+1+2}+\ket{0+1}\ket{3}+\ket{2}\ket{1+2+3}+\ket{1+2}\ket{0}-5\ket{1}\ket{1+2}$. These states are entangled states and they are orthogonal to the product states of the UPB. These entangled states span a four-dimensional entangled subspace which is a part of a bigger $3\otimes3$ subspace of the entire $3\otimes4$ Hilbert space. The $3\otimes3$ subspace is spanned by $\{\ket{0},\ket{1},\ket{2}\}\otimes\{\ket{0},\ket{1+2},\ket{3}\}$. The normalized projector onto the four-dimensional entangled subspace is a BE state of rank-$4$ because this state has similar structure like the BE state due to the UPB, given in Eq.~(\ref{eq1}). The present rank-$4$ BE state of $3\otimes4$ can be due to a reducible UPB as mentioned in Proposition \ref{prop3}. Such a UPB is given as the following:

\begin{equation}\label{eq3}
\begin{array}{c}
\ket{0}\ket{1-2},~ \ket{0}(2\ket{0}-\ket{1+2}),~\ket{1}\ket{1-2},~\ket{2}\ket{1-2},~\ket{2}(\ket{1+2}-2\ket{3}),~ \ket{1-2}\ket{0},~\ket{0-1}\ket{3},\\[1 ex]
\ket{0+1+2}\ket{0+1+2+3}.
\end{array}
\end{equation}
Notice that the party who is holding the four-level system (say, Bob) can perform an orthogonality-preserving projective measurement to eliminate certain states. Corresponding projection operators are $\mathcal{P}_1=\ket{1-2}\bra{1-2}$ and $\mathcal{P}_2=\mathbb{I}-\ket{1-2}\bra{1-2}$, where $\mathbb{I}$ is an identity operator acting on a four-level quantum system. The proof of the unextendibility of the above states comes from the following facts: (i) the above states are orthogonal to each other and they span an eight-dimensional subspace of the $3\otimes4$ Hilbert space, (ii) the normalized projector onto the rest four-dimensional subspace is a BE state (as mentioned earlier) which implies that the range of the BE state must not has sufficient orthogonal product states to span the four-dimensional subspace, so, the above states cannot be extended to a COPB anyway, (iii) if there are one or more orthogonal product states in the four-dimensional subspace then adding those states with the above states can give a UPB, (iv) but BE states cannot have rank $<4$. Thus, the states of the above equation forms a UPB.

\begin{figure}[h]
\centering
\includegraphics[scale=0.22]{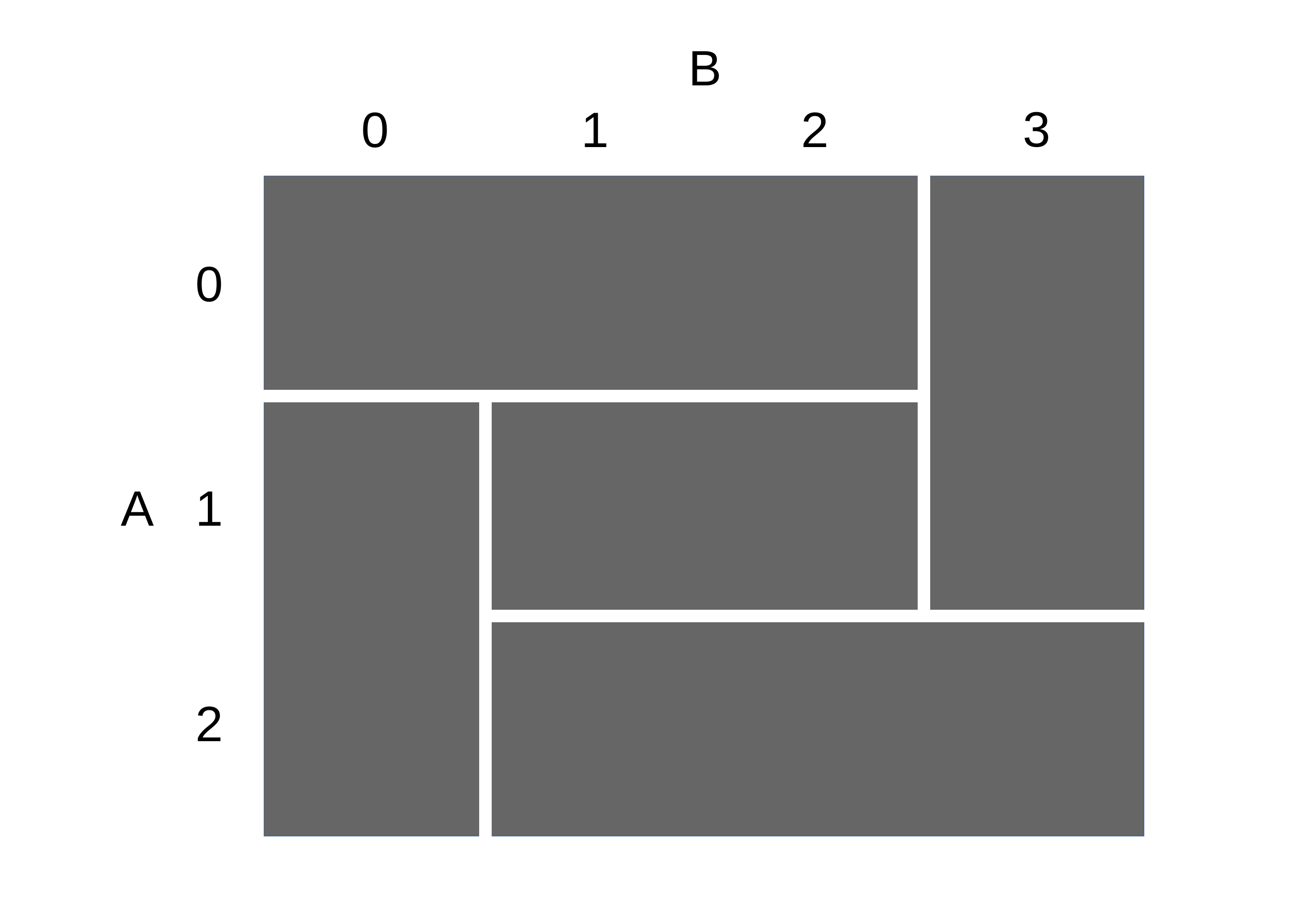}
\caption{A tile structure in $3\otimes 4$ Hilbert space.}\label{fig1}
\end{figure}

The rank-$4$ BE state which we have constructed here, can also be due to an irreducible UPB in $3\otimes4$. Such an irreducible UPB is given as the following:
\begin{equation}\label{eq4}
\begin{array}{c}
\ket{0}\ket{0-1},~ \ket{0}(\ket{0+1}-2\ket{2}),~\ket{1}\ket{1-2},~\ket{2}\ket{2-3},~\ket{2}(2\ket{1}-\ket{2+3}),~ \ket{1-2}\ket{0},~\ket{0-1}\ket{3},\\[1 ex]
\ket{0+1+2}\ket{0+1+2+3}.
\end{array}
\end{equation}
Notice that the four orthogonal entangled states which are constructed after Eq.~(\ref{eq2}), span a four-dimensional Hilbert space. Moreover, these entangled states are pairwise orthogonal to the states, given in Eq.~(\ref{eq3}) and also to the states, given in Eq.~(\ref{eq4}). This implies that the entangled states which are constructed after Eq.~(\ref{eq2}) together with the eight orthogonal product states of Eq.~(\ref{eq3}) or with the eight orthogonal product states of Eq.~(\ref{eq4}), can span the whole $3\otimes4$ Hilbert space. So, this can be true only when the product states of Eq.~(\ref{eq3}) and that of Eq.~(\ref{eq4}) span the same subspace.

Following the same argument as given for the UPB of Eq.~(\ref{eq3}), it is possible to show that the above states form a UPB. Again, the proof of irreducibility comes from the fact that no party is able to begin with a nontrivial orthogonality-preserving measurement (in this regard see the Refs.~\cite{Groisman01, Walgate02}). For Alice (who is holding the three-level quantum system), any measurement can be described by a set of $3\times3$ positive matrices. It can be shown that if Alice wants to eliminate any state from the above UPB keeping the remaining states orthogonal then the positive matrices are proportional to $3\times3$ identity matrix. So, the measurement becomes trivial. For Bob (who is holding the four-level quantum system), any measurement can be described by a set of $4\times4$ positive matrices. It can be shown that if Bob wants to eliminate any state from the above UPB keeping the remaining states orthogonal then the positive matrices are proportional to $4\times4$ identity matrix. So, the measurement becomes trivial. For more details see the Appendix section. We next summarize the above discussion into a proposition:

\begin{proposition}\label{prop4}
In $d_1\otimes d_2$, $d_1d_2>9$, $d_1,d_2\geq3$, consider an edge state which is due to a reducible UPB. The same edge state can also be due to an irreducible UPB. Thus, looking at the edge state, it may not be possible to conclude if the state is due to a reducible UPB or it is due to an irreducible UPB.
\end{proposition}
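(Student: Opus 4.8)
The plan is to reduce the statement to a single structural fact: the edge state attached to a UPB is, by definition, the normalized projector onto $\mathcal{H}_E$, the orthogonal complement of the span $\mathcal{H}_S$ of the UPB states. Hence the edge state depends only on the subspace $\mathcal{H}_S$, not on the particular product states chosen to span it. Consequently, to prove the proposition it suffices to exhibit, in one Hilbert space $d_1 \otimes d_2$ with $d_1 d_2 > 9$, two UPBs that span the very same subspace $\mathcal{H}_S$, one of which is reducible and the other irreducible; the identical complement then forces an identical edge state, so that state cannot betray which kind of UPB produced it.

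For the witnessing case $3 \otimes 4$ I would take the reducible family of Eq.~(\ref{eq3}) and the irreducible family of Eq.~(\ref{eq4}). First I would check that each set is genuinely a UPB, reusing the argument already given: the eight states in each family are pairwise orthogonal and span an eight-dimensional subspace, the complementary four-dimensional subspace carries a rank-$4$ BE state and therefore contains no product state (a product state there would drop the BE rank below four, which is impossible), so neither family extends to a complete orthogonal product basis. Next I would verify that the two families span the \emph{same} $\mathcal{H}_S$: the four fixed entangled states constructed after Eq.~(\ref{eq2}) are pairwise orthogonal to both families, and those four states together with either family span all of $3\otimes 4$; since the four entangled states are common to both, the orthogonal complements of the two families must coincide. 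Thus both UPBs yield one and the same edge state.

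It then remains to certify the two reducibility properties. Reducibility of Eq.~(\ref{eq3}) is immediate: Bob's two-outcome projective measurement with $\mathcal{P}_1 = \ket{1-2}\bra{1-2}$ and $\mathcal{P}_2 = \mathbb{I} - \ket{1-2}\bra{1-2}$ eliminates several states while leaving the survivors pairwise orthogonal. The harder half, and the main obstacle, is proving irreducibility of Eq.~(\ref{eq4}). Here I would follow the measurement-triviality analysis of Refs.~\cite{Walgate02, Groisman01}: encode a putative first orthogonality-preserving measurement of Alice by a set of $3\times 3$ positive matrices and of Bob by a set of $4\times 4$ positive matrices, then impose, for every pair of states required to remain orthogonal after a single outcome, the resulting linear constraint on the off-diagonal entries. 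The aim is to show that these constraints kill all off-diagonal entries and equalize the diagonal, forcing every measurement operator to be a scalar multiple of the identity, so that no state can be eliminated nontrivially. Grinding through this elimination for both parties is the technical heart and is deferred to the Appendix.

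Finally, to reach arbitrary $d_1 \otimes d_2$ with $d_1 d_2 > 9$ I would embed the construction. By the cited rank results any rank-$4$ BE state lives in a $3\otimes 3$ subspace, so one starts from such an edge state and fills the remaining $(d_1 d_2 - 9)$ dimensions with orthogonal product states. Padding with simple product states that a single party can collectively project away gives a reducible UPB spanning a chosen $\mathcal{H}_S$ (this is exactly Proposition~\ref{prop3}); replacing that padding by an irreducible completion spanning the same $\mathcal{H}_S$, in the manner of the passage from Eq.~(\ref{eq3}) to Eq.~(\ref{eq4}), gives an irreducible UPB with the identical complement and hence the identical edge state. The only point needing care in the general case is to ensure the irreducible padding spans exactly the same subspace while admitting no nontrivial orthogonality-preserving measurement, which is again settled by the triviality argument above.
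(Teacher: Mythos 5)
Your proposal is correct and follows essentially the same route as the paper: exhibit the reducible UPB of Eq.~(\ref{eq3}) and the irreducible UPB of Eq.~(\ref{eq4}) in $3\otimes4$, argue via the four common orthogonal entangled states that both span the same subspace and hence yield the same edge state, certify reducibility by Bob's projective measurement and irreducibility by the measurement-triviality analysis deferred to the Appendix. If anything, you are slightly more explicit than the paper about the reduction to ``same span $\Rightarrow$ same edge state'' and about how the $3\otimes4$ witness should be embedded into general $d_1\otimes d_2$ with $d_1d_2>9$ (the paper simply summarizes its $3\otimes4$ discussion into the general proposition), but the underlying argument is the same.
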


Notice that both the UPBs (reducible and irreducible) in $3\otimes4$ can be considered as real UPBs. They also have the property that if the stopper is removed then the rest of a UPB can be extended to a complete basis by adding other orthogonal product states. Hence, starting from the rank-$4$ BE state, constructed here, one can construct noisy BE states, having rank-$5$ to full rank, which satisfy the range criterion. The details of the construction procedures can be found in Refs.~\cite{Bandyopadhyay05, Halder19-1}. Particularly, in Ref.~\cite{Bandyopadhyay05}, starting from a fixed real UPB in $3\otimes3$, it was shown how a class of noisy BE states having rank $\geq5$, can be constructed, which satisfy the range criterion. However, a general statement regarding this can be given in the following way:

\begin{proposition}\label{prop5}
The two-qutrit BE states, having rank-$5$ to full rank, satisfy the range criterion if it has a form, $\sigma = q\delta_{sep}+(1-q)\delta_{edge}$, $q$ is a small positive number, $\delta_{sep}=\sum_ip_i\ket{\varphi_i}\bra{\varphi_i}$, $\ket{\varphi_i}$s are product states and $\delta_{edge}=\sum_ip_i^\prime\ket{\varphi_i^\prime}\bra{\varphi_i^\prime}$ is a rank-$4$ edge state, $\forall i$ $\ket{\varphi_i}$, $\ket{\varphi_i^\prime}$ are pairwise orthogonal states with real normalizing coefficients.
\end{proposition}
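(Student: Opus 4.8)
The plan is to use the real‑coefficient hypothesis to collapse the range criterion to a purely dimensional statement about the range of $\sigma$, and then to settle that statement with the fact that at most four orthogonal product states in $3\otimes3$ always extend to a complete orthogonal product basis. Throughout I write $\mathcal{H}_S$ for the subspace spanned by the (real) UPB generating $\delta_{edge}$, and $\mathcal{H}_E=\mathcal{H}_S^{\perp}$ for the $4$-dimensional entangled subspace on which $\delta_{edge}$ is supported, so that the range of $\delta_{edge}$ is exactly $\mathcal{H}_E$.

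First I would record the reduction. Since every $\ket{\varphi_i}$ and $\ket{\varphi_i^\prime}$ carries real normalizing coefficients, each projector $\ket{\varphi_i}\bra{\varphi_i}$ is invariant under the partial transpose $\mathbb{T}$, and $\delta_{edge}$, being a real‑UPB edge state, is likewise PT‑invariant (as noted in the discussion preceding Proposition~\ref{prop1}). Hence $\sigma^{\mathbb{T}}=\sigma$, so the support of $\sigma^{\mathbb{T}}$ coincides with that of $\sigma$, and real coefficients give $\ket{b_i^{\ast}}=\ket{b_i}$ for each factor. Therefore the range criterion reduces to a single task: exhibit real product states $\{\ket{a_i}\ket{b_i}\}$ that span the range of $\sigma$, since the same states then automatically span the support of $\sigma^{\mathbb{T}}$.

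Next I would identify that range. For $q\in(0,1)$ the range of a convex combination of positive operators is the sum of their ranges, so the range of $\sigma$ equals $\mathrm{span}\{\ket{\varphi_i}\}+\mathcal{H}_E$. The hypothesis that $\{\ket{\varphi_i}\}$ and $\{\ket{\varphi_i^\prime}\}$ are pairwise orthogonal forces the noise states $\ket{\varphi_i}$ to lie in $\mathcal{H}_E^{\perp}=\mathcal{H}_S$ and to be mutually orthogonal; taking them, as in the construction, to be $k$ of the five UPB states, the range has dimension $4+k$ and its orthogonal complement is precisely the span of the remaining $5-k$ UPB states. Thus ``rank‑$5$ to full rank'' corresponds exactly to $k=1,\dots,5$, i.e. to $5-k\le 4$.

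The heart of the argument is the final step. The orthogonal complement of the range of $\sigma$ is the span of at most four orthogonal, real product states, so by the $3\otimes3$ extendibility fact \cite{Divincenzo03} these remaining UPB states extend to a complete orthogonal product basis; the $4+k$ completing states are product states orthogonal to that complement, hence they lie in the range of $\sigma$, and being $\dim$ mutually orthogonal vectors they span it, and they may be chosen real. By the reduction of the second paragraph this proves the range criterion. I expect the only delicate point to be the boundary between rank‑$4$ and rank‑$5$: for the edge state itself ($k=0$) the complement is the entire \emph{unextendible} UPB, leaving no product state in $\mathcal{H}_E$, so the criterion fails — consistent with $\delta_{edge}$ being an edge state. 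Making the inequality $5-k\le4$ the pivot that switches extendibility on, and checking that the completing product states can be taken real and genuinely span (not merely lie in) the range, is where care is needed; the smallness of $q$ plays no role in this range computation and only serves to keep $\sigma$ bound entangled rather than separable.
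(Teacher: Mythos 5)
Your proposal is correct and follows essentially the same route as the paper's proof: identify the orthogonal complement of the range of $\sigma$ as the span of at most four mutually orthogonal product states, invoke the fact from \cite{Divincenzo03} that such a set extends to a complete orthogonal product basis, and observe that the completing product states span the range of $\sigma$ (with the real-coefficient/PT-invariance reduction, which the paper handles in the surrounding discussion, making this sufficient for the range criterion). Your explicit treatment of the PT-invariance step and the $k=0$ boundary case are welcome refinements but not a different argument.
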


\begin{proof}
It is known that the two-qutrit rank-$4$ bound entangled states are due to UPBs \cite{Chen11}. So, they are supported in the entangled subspaces. The state $\delta_{sep}$ is supported in a subspace which is orthogonal to an entangled subspace. $\delta_{sep}$ has rank $r$, where $1\leq r \leq5$. The remaining subspace of the given two-qutrit Hilbert space, where the state $\sigma$ is not supported, has the dimension $\leq4$. There, one can think about a set of remaining orthogonal product states $\mathbb{S}$ = $\{\ket{\varphi_i}\}_i$ which are not in the convex combination of $\delta_{sep}$ and those product states $\{\ket{\varphi_i}\}_i$ span the remaining subspace of dimension $\leq4$ (where the state $\sigma$ is not supported). Again, it is also known that for a bipartite system, any four pure orthogonal product states (or less number of such states) can always be extended to a COPB \cite{Divincenzo03}. So, the orthogonal product states of the set $\mathbb{S}$ can be extended to a COPB. We denote the set of product states, which is required for the extension of $\mathbb{S}$ to a COPB, by $\mathbb{S}^\prime$. Remember that the product states of $\mathbb{S}$ and that of $\mathbb{S}^\prime$ are orthogonal to each other. Clearly, the range of $\sigma$ is being spanned by the product states of the set $\mathbb{S}^\prime$. Thus, it is sufficient to prove that $\sigma$ satisfies the range criterion. 
\end{proof}

The above proposition is particularly important because there exist two-qutrit BE states which have rank greater than $4$ and can violate the range criterion \cite{Clarisse06, Ha07, Kye12}. Interestingly, in $3\otimes4$, there is only one type of reducible UPB from the cardinality point of view and this is given in the following proposition:

\begin{proposition}\label{prop6}
In $3\otimes4$, all reducible UPBs are of size eight. 
\end{proposition}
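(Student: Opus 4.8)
The plan is to show directly that reducibility forces the size to be eight, by extracting the structure a reducing measurement imposes on a UPB. Recall that in $3\otimes4$ the complementary entangled subspace of any UPB has dimension at least four, so the size $N$ obeys $N\le d_1d_2-4=8$; the content of the proposition is that reducibility forces this maximal value, i.e.\ no size-$6$ or size-$7$ UPB can be reduced. Since Proposition~\ref{prop3} together with the explicit UPB of Eq.~(\ref{eq3}) already exhibits reducible UPBs of size eight, it suffices to prove that \emph{every} reducible UPB has size exactly eight.

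First I would pin down the form of a reducing step. By the theory of orthogonality-preserving LOCC, if the UPB is reducible then one party---say Bob on $\mathbb{C}^4$, the argument for Alice on $\mathbb{C}^3$ being identical---can implement a nontrivial orthogonality-preserving projective measurement $\{\mathcal{P}_1,\mathcal{P}_2\}$ onto a splitting $\mathbb{C}^4=V_1\oplus V_2$ that eliminates at least one state. The key structural lemma I would establish is that \emph{no state can straddle the two blocks}: every $\ket{a_i}\ket{b_i}$ must have $\ket{b_i}$ lying entirely in $V_1$ or entirely in $V_2$. This lemma is the heart of the argument and I expect its proof, together with the reduction to a projective block measurement, to be the main obstacle. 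The idea is a ``reflection'' trick: suppose some state $\ket{\alpha}\ket{b}$ were split, $\ket{b}=\ket{b^{1}}+\ket{b^{2}}$ with $\ket{b^{j}}\in V_j$ both nonzero. Using that the measurement preserves orthogonality---so within each outcome the projected parts $\braket{\alpha|a_j}\bra{b^{1}}\, b_j^{1}\rangle$ and $\braket{\alpha|a_j}\bra{b^{2}}\, b_j^{2}\rangle$ vanish separately---together with the plain UPB orthogonality relations for the unsplit states, one checks that the product vector $\ket{\alpha}\big(\ket{b^{1}}+\mu\ket{b^{2}}\big)$ with $\mu=-\|b^{1}\|^{2}/\|b^{2}\|^{2}$ is orthogonal to every member of the UPB. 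That is a product state in the complementary subspace, contradicting unextendibility. Hence every state occupies a single block, and the UPB partitions as $\mathcal{S}=\mathcal{S}_1\sqcup\mathcal{S}_2$ with $\mathcal{S}_j\subset\mathbb{C}^3\otimes V_j$.

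With the clean partition in hand, the next step is a short observation: a product vector $\ket{a}\ket{v}$ with $\ket{v}\in V_1$ is automatically orthogonal to $\mathcal{S}_2$, so if it were orthogonal to $\mathcal{S}_1$ it would be orthogonal to all of $\mathcal{S}$, which is impossible. Thus the complement of $\mathcal{S}_j$ inside the sub-block $\mathbb{C}^3\otimes V_j$ contains no product state; equivalently each $\mathcal{S}_j$ is either a complete product basis of its sub-block or a UPB of it, and not both can be complete (otherwise the entangled subspace is trivial and $\mathcal{S}$ is not a UPB).

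It then remains a finite case analysis on $\dim V_1$, invoking two facts used earlier in the paper: there is no UPB in $a\otimes b$ whenever $\min(a,b)\le2$, and the only UPB size in $3\otimes3$ is five. For Bob with $\dim V_1=2$, both sub-blocks are $3\otimes2$ and admit no UPB, forcing both $\mathcal{S}_j$ to be complete and the entangled subspace to vanish---impossible; the same collapse occurs for every splitting on Alice's $\mathbb{C}^3$ side, where the sub-blocks are $1\otimes4$ and $2\otimes4$. The only surviving possibility is Bob with $\dim V_1=1$ (equivalently $\dim V_1=3$): then $\mathcal{S}_1$ is a complete basis of the $3\otimes1$ block (three states) while $\mathcal{S}_2$ must be a $3\otimes3$ UPB (five states), giving $N=3+5=8$. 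This matches the reduction exhibited in Eq.~(\ref{eq3}), where the splitting is $V_1=\mathrm{span}\{\ket{1-2}\}$, and establishes that every reducible UPB in $3\otimes4$ has size eight.
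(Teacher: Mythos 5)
Your argument is correct in outline and is genuinely more rigorous than the paper's own proof, which simply asserts that to construct a reducible UPB one must first choose a subspace, place a UPB there, and then append simple product states; the paper then observes that the only subspace of $3\otimes4$ admitting a UPB is a two-qutrit one, whose UPBs all have size five, giving $5+3=8$. You instead \emph{derive} that block decomposition from the definition of reducibility: your no-straddling lemma is exactly the structural claim the paper takes for granted, and your reflection trick does work --- for a member $\ket{a_j}\ket{b_j}$ with $\braket{a_j|\alpha}=0$ orthogonality to $\ket{\alpha}(\ket{b^1}+\mu\ket{b^2})$ is automatic, while for $\braket{a_j|\alpha}\neq0$ the post-measurement orthogonality in each block gives $\braket{b_j^1|b^1}=\braket{b_j^2|b^2}=0$, so the reflected vector is a product state in the complementary subspace, contradicting unextendibility. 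What the paper's route buys is brevity; what yours buys is an actual proof of the key decomposition plus a reusable lemma. The final case analysis (no UPB when a local factor has dimension $\leq2$; the unique size five in $3\otimes3$) is the same in both.

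Two points need attention before your write-up is complete. First, the step you yourself flag --- reducing a general orthogonality-preserving LOCC protocol to a two-outcome projective block measurement --- is a genuine gap as written, but it can be bypassed entirely: for an arbitrary positive semidefinite POVM element $M$ of a nontrivial orthogonality-preserving measurement on Bob's side, set $\ket{b'}=(M-\lambda\mathbb{I})\ket{b}$ with $\lambda=\bra{b}M\ket{b}/\braket{b|b}$; the same computation shows $\ket{\alpha}\ket{b'}$ is orthogonal to every member of the set, so unextendibility forces $\ket{b'}=0$, i.e.\ every Bob-side vector of the UPB is an eigenvector of $M$. A nontrivial $M$ then splits $\mathbb{C}^4$ into $r\geq2$ eigenspaces and partitions the UPB into blocks exactly as you want, with no projectivity assumption. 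Second, once $r\geq2$ blocks are allowed you must also dispose of the splittings $4=1+1+2$ and $4=1+1+1+1$ (and $3=1+1+1$ on Alice's side); these collapse just like your $2+2$ case, since every sub-block then has a local factor of dimension $\leq2$, hence admits no UPB, forcing all blocks to be complete and the entangled subspace to vanish. With these repairs only the $1+3$ splitting on Bob's side survives, giving $3+5=8$ as claimed.
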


\begin{proof}
To construct a reducible UPB, one has to choose a subspace of the given Hilbert space first and then can think about a UPB in that subspace. Thereafter, with that subspace UPB, one has to add a few simple product states. So, the size of the reducible UPB in a given Hilbert space is dependent on the different sizes of the UPBs of the considered subspace. In case of $3\otimes4$ Hilbert space, one can only think about a two-qutrit subspace because $3\otimes3$ is the lowest bipartite dimension where a UPB can exist \cite{Divincenzo03}. It is also known that there is only one type of UPBs in $3\otimes3$ and that is the UPBs of size five \cite{Divincenzo03}. So, with such a UPB of size five in the two-qutrit subspace, one can add another three simple pure orthogonal product states to produce a new UPB in $3\otimes4$. The new UPB is of size eight and the simple product states which are not included in the two-qutrit subspace UPB can be eliminated via orthogonality-preserving LOCC. In this way, all reducible UPBs in $3\otimes4$, are of size eight. 
\end{proof}

If we go beyond $3\otimes4$, then there are reducible UPBs of different sizes. Unlike Proposition \ref{prop4}, in $3\otimes4$ if a UPB generated edge state is given, whose rank is greater than $4$, then obviously, the state is due to an irreducible UPB. We again go back to the discussion of noisy BE states and present the following proposition:

\begin{proposition}\label{prop7}
Consider a rank-$5$ BE state which has a form, $\sigma = p\delta_{sep}+(1-p)\delta_{edge}$, $p$ is a small positive number, $\delta_{sep}$ is a separable state and it is orthogonal to $\delta_{edge}$ which is a rank-$4$ edge state. If the state $\sigma$ satisfies the range criterion then it must be supported in a two-qutrit subspace.
\end{proposition}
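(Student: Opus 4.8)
The plan is to pin down the two summands, classify the product vectors in the range of $\sigma$, and then show this set is too small to span the range unless $\ket{a}\ket{b}$ already lies in the two-qutrit subspace carrying $\delta_{edge}$. First I would use the orthogonality of $\delta_{sep}$ and $\delta_{edge}$: their supports are orthogonal, so the ranks add and $\text{rank}(\delta_{sep})=5-4=1$, which makes $\delta_{sep}=\ket{a}\bra{a}\otimes\ket{b}\bra{b}$ a single product state. By the classification recalled earlier (Refs.~\cite{Chen11,Chen13}), the rank-$4$ edge state $\delta_{edge}$ is due to a UPB and is supported in a two-qutrit subspace $V_A\otimes V_B$ with $V_A,V_B$ three-dimensional; write $\mathcal{H}_E$ for this support. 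Since $\mathcal{H}_E$ is the entangled subspace complementary to a UPB, it is completely entangled and contains no product vector. Thus $\text{range}(\sigma)=\mathcal{H}_E\oplus\text{span}(\ket{a}\ket{b})$ is five-dimensional, and the claim reduces to showing $\ket{a}\in V_A$ and $\ket{b}\in V_B$.

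The central step is to describe every product vector in $\text{range}(\sigma)$. Each such vector has the form $\ket{c}\ket{d}=\alpha\ket{a}\ket{b}+\ket{\chi}$ with $\ket{\chi}\in\mathcal{H}_E$, and necessarily $\alpha\neq0$, for otherwise $\ket{c}\ket{d}=\ket{\chi}$ would be a product vector inside the completely entangled subspace $\mathcal{H}_E$. Suppose, for contradiction, that $\ket{a}\notin V_A$, and let $Q_A$ project onto the orthogonal complement of $V_A$. Applying $Q_A\otimes\mathbb{I}$ to the normalized relation $\ket{c}\ket{d}=\ket{a}\ket{b}+\ket{\chi}$ annihilates $\ket{\chi}$ (as $\ket{\chi}\in V_A\otimes V_B$) and gives $(Q_A\ket{c})\otimes\ket{d}=(Q_A\ket{a})\otimes\ket{b}$. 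Because $Q_A\ket{a}\neq0$, both sides are nonzero product vectors, which forces $\ket{d}\propto\ket{b}$. Hence every product vector in $\text{range}(\sigma)$ has its second factor proportional to $\ket{b}$, so all of them lie in $U:=\mathbb{C}^{d_1}\otimes\text{span}(\ket{b})$.

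It then remains to bound $\text{range}(\sigma)\cap U$. A vector in this intersection is simultaneously of the form $\ket{\eta}\otimes\ket{b}$ and $\alpha\ket{a}\ket{b}+\ket{\chi}$; subtracting shows $\ket{\chi}=(\ket{\eta}-\alpha\ket{a})\otimes\ket{b}$ is a product vector in $\mathcal{H}_E$, whence $\ket{\chi}=0$ and the vector equals $\alpha\ket{a}\ket{b}$. So $\text{range}(\sigma)\cap U=\text{span}(\ket{a}\ket{b})$ is one-dimensional, the product vectors in the range span at most a line, and they cannot span the five-dimensional $\text{range}(\sigma)$; the range criterion fails, a contradiction. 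The symmetric argument with $\mathbb{I}\otimes Q_B$ excludes $\ket{b}\notin V_B$. Therefore $\ket{a}\in V_A$ and $\ket{b}\in V_B$, so $\text{range}(\sigma)\subseteq V_A\otimes V_B$ and $\sigma$ is supported in a two-qutrit subspace.

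I expect the main obstacle to be the product-vector classification of the second paragraph—in particular, justifying that every product vector in the range carries a nonzero $\ket{a}\ket{b}$-component and that the equality of two product vectors forces proportional factors. The entire argument hinges on $\mathcal{H}_E$ being free of product vectors, which is precisely the content of the cited rank-$4$ classification and must be invoked carefully.
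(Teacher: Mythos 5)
Your proof is correct and follows the same contradiction strategy as the paper: assume $\delta_{sep}$ (necessarily a pure product state $\ket{a}\ket{b}$, since the orthogonal supports force its rank to be one) lies outside the two-qutrit subspace supporting $\delta_{edge}$, and show the five-dimensional range then contains too few product vectors to satisfy the range criterion. Your version is in fact tighter than the paper's, which merely asserts---pointing to examples in a cited reference---that the range then contains only the single product state $\delta_{sep}$, whereas your projector argument with $Q_A\otimes\mathbb{I}$ (and its mirror with $\mathbb{I}\otimes Q_B$) actually proves that claim from the complete entanglement of the rank-$4$ state's support.
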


\begin{proof}
Any Rank-$4$ BE state is supported in an entangled subspace of a two-qutrit Hilbert space \cite{Chen13}. Now, in the above it is considered that $\sigma$ is a rank-$5$ state while $\delta_{edge}$ is a rank-$4$ state and $\delta_{sep}$ is a rank-$1$ separable state which is orthogonal to $\delta_{edge}$. Next, we consider a bigger Hilbert space and we suppose that $\delta_{sep}$ is not supported in the two-qutrit subspace in which $\delta_{edge}$ is. So, the five-dimensional subspace, where $\sigma$ is supported, must have deficit of product states. How exactly the `deficit of product states' is occurring, can be understood in the following way: firstly, the rank-4 BE state is supported in an entangled subspace as mentioned earlier, secondly, we are assuming that the rank-$1$ separable state is orthogonal to the rank-$4$ state and in fact, it is not supported in the two-qutrit subspace where the rank-$4$ BE state is supported. Thus, in the five-dimensional subspace where the state $\sigma$ is supported, must contain only one product state, i.e., the state $\delta_{sep}$. Example of such five-dimensional subspaces can be found in Ref.~\cite{Halder19-1}. Also note that these five-dimensional subspaces are the complementary subspaces of uncompletable product bases (for uncompletable product bases, see Ref.~\cite{Divincenzo03}) and they have deficit of product states. The above explanation implies that $\sigma$ must not satisfy the range criterion. Clearly, if $\sigma$ satisfies the range criterion then $\delta_{sep}$ must belong to the two-qutrit subspace where $\delta_{edge}$ is supported. Hence, it is proved that the rank-$5$ state $\sigma$ must be supported in the two-qutrit subspace if it satisfies the range criterion. 
\end{proof}

In Ref.~\cite{Bandyopadhyay05}, the authors used a particular reducible UPB, to construct low-rank noisy BE states which satisfy the range criterion. It is also possible to explore noisy BE states due to reducible UPBs which are maximally robust in certain directions \cite{Halder19-1}. Furthermore, in the present paper we use rank-$4$ BE states in several places to produce noisy BE states, having rank-$5$ to full rank, which satisfy the range criterion. So, it is important to provide protocols to produce rank-$4$ BE states and reducible UPBs. We first give a protocol to construct rank-$4$ BE states which are supported in different $3\otimes 3$ subspaces of a bigger bipartite Hilbert space ($d_1\otimes d_2$). Thereafter, we construct a class of reducible UPBs in $d_1\otimes d_2$. All these UPBs have the size ($d_1d_2-4$) and they can be considered as real UPBs. Respect to the reducible UPBs of Ref.~\cite{Bandyopadhyay05}, the present UPBs are important because they help to explore different two-qutrit entangled subspaces of a bigger $d_1\otimes d_2$ Hilbert space. For the present purpose, we now consider the tile structure, given in Fig.~\ref{fig2}. 

\begin{figure}[h]
\centering
\includegraphics[scale=0.24]{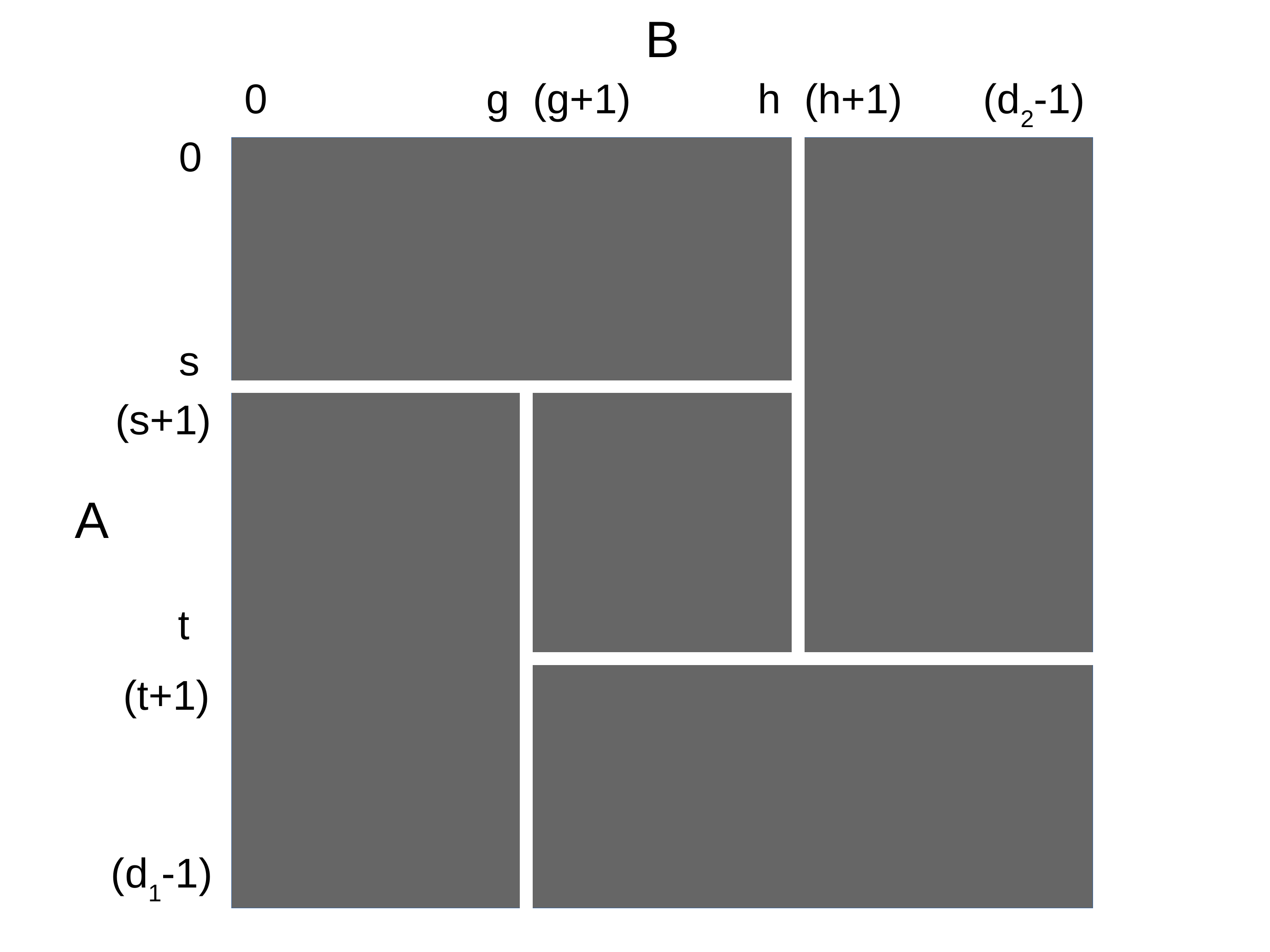}
\caption{A tile structure in $d_1\otimes d_2$ Hilbert space.}\label{fig2}
\end{figure}

Consider the following states first: 
\begin{equation}\label{eq5}
\begin{array}{l}
\ket{\phi_0^a}=\ket{0+1+\cdots+s},~\ket{\phi_1^a}=\ket{(s+1)+(s+2)+\cdots+t},~\ket{\phi_2^a}=\ket{(t+1)+(t+2)+\cdots+(d_1-1)},\\[1 ex] 
\ket{\phi_0^b} = \ket{0+1+\cdots+g},~\ket{\phi_1^b} = \ket{(g+1)+(g+2)+\cdots+h},~\ket{\phi_2^b} = \ket{(h+1)+(h+2)+\cdots+(d_2-1)}.
\end{array}
\end{equation}
In the above we can consider $0\leq s\leq(d_1-3)$, and if $s=0$, then $\ket{\phi_0^a}=\ket{0}$. Again, if $(d_1-1-s)=2$, then $(s+1)=t$, $(t+1)=(d_1-1)$, which implies $\ket{\phi_1^a}=\ket{d_1-2}$ and $\ket{\phi_2^a}=\ket{d_1-1}$. But remember that $t$ is strictly less than $(d_1-1)$. In the similar way, we define $1\leq h\leq(d_2-2)$. If $h=1$ then $g=0$ and $(g+1)=h$ which implies $\ket{\phi_0^b}=\ket{0}$ and $\ket{\phi_1^b}=\ket{1}$. Also remember that $g$ is strictly less than $h$. Again, if $h=(d_2-2)$ then $(h+1)=(d_2-1)$ which implies $\ket{\phi_2^b}=\ket{d_2-1}$. Now, look at the above tile structure and consider any COPB which fits in the tile structure. In order to construct a UPB, we can consider a stopper state $\ket{s}=\ket{0+1+\cdots+d_1}\ket{0+1+\cdots+d_2}$. Due to the use of the stopper state, certain states of the COPB become nonorthogonal to the stopper state. Such states are missing states as defined earlier. The missing states are given by-
\begin{equation}\label{eq6}
\begin{array}{c}
\ket{\psi_1} = \ket{\widetilde{\phi_0^a}}\ket{\widetilde{\phi_0^b}+\widetilde{\phi_1^b}},~\ket{\psi_2} = \ket{\widetilde{\phi_0^a}+\widetilde{\phi_1^a}}\ket{\widetilde{\phi_2^b}},~\ket{\psi_3} = \ket{\widetilde{\phi_2^a}}\ket{\widetilde{\phi_1^b}+\widetilde{\phi_2^b}},~\ket{\psi_4} = \ket{\widetilde{\phi_1^a}+\widetilde{\phi_2^a}}\ket{\widetilde{\phi_0^b}},~\ket{\psi_5} = \ket{\widetilde{\phi_1^a}}\ket{\widetilde{\phi_1^b}},
\end{array}
\end{equation}
where $\ket{\widetilde{\phi_i^a}}$ and $\ket{\widetilde{\phi_i^b}}$ are the normalized versions of $\ket{\phi_i^a}$ and $\ket{\phi_i^b}$ respectively. Using the above product states it is possible to construct four entangled states which are orthogonal to the stopper state. Then the normalized projector onto the subspace, spanned by those entangled states is a rank-$4$ BE state supported in a two-qutrit subspace of $d_1\otimes d_2$ Hilbert space. In fact, these type of rank-$4$ BE states have the similar structure as that of Bennett {\it et al}.~in $3\otimes3$ [the BE state which is due to the UPB, given in Eq.~(\ref{eq1})]. Here the two-qutrit Hilbert space is spanned by $\{\ket{\widetilde{\phi_0^a}}, \ket{\widetilde{\phi_1^a}}, \ket{\widetilde{\phi_2^a}}\}\otimes\{\ket{\widetilde{\phi_0^b}}, \ket{\widetilde{\phi_1^b}}, \ket{\widetilde{\phi_2^b}}\}$. These BE states are special as all these states are extreme points of the convex compact set of states with PPT \cite{Chen11, Halder19-3}. Starting from such states it is possible to construct low-rank BE states which satisfy the range criterion. Moreover, such states can be due to reducible UPBs when $d_1d_2>9$ (by Proposition \ref{prop3}). To realize those UPBs we consider the following:
\begin{equation}\label{eq7}
\begin{array}{c}
\ket{\psi_1^\prime} = \ket{\widetilde{\phi_0^a}}\ket{\widetilde{\phi_0^b}-\widetilde{\phi_1^b}},~
\ket{\psi_2^\prime} = \ket{\widetilde{\phi_0^a}-\widetilde{\phi_1^a}}\ket{\widetilde{\phi_2^b}},~
\ket{\psi_3^\prime} = \ket{\widetilde{\phi_2^a}}\ket{\widetilde{\phi_1^b}-\widetilde{\phi_2^b}},~
\ket{\psi_4^\prime} = \ket{\widetilde{\phi_1^a}-\widetilde{\phi_2^a}}\ket{\widetilde{\phi_0^b}},\\[1 ex]
\ket{s^\prime} = \ket{\widetilde{\phi_0^a}+\widetilde{\phi_1^a}+\widetilde{\phi_2^a}}\ket{\widetilde{\phi_0^b}+\widetilde{\phi_1^b}+\widetilde{\phi_2^b}}.
\end{array}
\end{equation}
Clearly, the above states form a UPB in a two-qutrit subspace as it has a similar form of the UPB by Bennett {\it et al}.~in $3\otimes3$, see Eq.~(\ref{eq1}). Now, to extend that to a UPB in $d_1\otimes d_2$, one can consider simple product states which can be eliminated by some orthogonality-preserving LOCC. In order to realize these product states, one should consider first the states $\ket{\Phi_j^a}$ and $\ket{\Phi_j^b}$ which are orthogonal to $\ket{\widetilde{\phi_i^a}}$ and $\ket{\widetilde{\phi_i^b}}$ respectively $\forall i=0,1,2$. We then consider the following orthogonal states: \{$\ket{\Phi_j^a}\ket{\widetilde{\phi_i^b}}$, $\ket{\widetilde{\phi_i^a}}\ket{\Phi_j^b}$, $\ket{\Phi_j^a}\ket{\Phi_{j^\prime}^b}$\}. These states together with the states of the above equation form a class of reducible UPBs of size ($d_1d_2-4$). Starting from the present rank-$4$ BE states, one can construct the noisy BE states which can satisfy the range criterion. To construct such states one can take the convex combination of a rank-$4$ BE state and a separable state. Again, the separable states are produced by taking the convex combination of pure product states chosen from the present UPBs (for details see Refs.~\cite{Bandyopadhyay05, Halder19-1}). UPBs of maximum size are also constructed in Ref.~\cite{Shi20} but those UPBs are different from the present UPBs. However, we next consider certain multipartite systems and discuss different constructions, properties of those systems.

\section{Multipartite systems}\label{sec4}
In this section we start with the three-qubit systems and prove the following proposition first:

\begin{proposition}\label{prop8}
All UPBs are irreducible UPBs for any three-qubit system. 
\end{proposition}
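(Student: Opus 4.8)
The plan is to mirror the strategy behind Proposition \ref{prop2} for the $3\otimes3$ case, swapping in the cardinality data appropriate to three qubits and then invoking the LOCC-indistinguishability of UPBs. The logical engine is the same: reducibility would let one \emph{begin} an LOCC discrimination of the UPB, and if the states surviving the first elimination form a set small enough to be fully distinguished by LOCC, then concatenating the two stages yields a perfect LOCC discrimination of the entire UPB, which is impossible \cite{Bennett99, Divincenzo03, Rinaldis04}. First I would pin down the admissible size. In $\mathbb{C}^2\otimes\mathbb{C}^2\otimes\mathbb{C}^2$ the maximal dimension of a product-free (completely entangled) subspace is $8-(2+2+2)+2=4$, so the complementary subspace $\mathcal{H}_E$ of any UPB has dimension at most $4$ and the UPB has size at least $8-4=4$. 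I would then invoke the classification of three-qubit UPBs, i.e.\ the analogue of the `only size five' fact used in the $3\otimes3$ argument: every UPB in $\mathbb{C}^2\otimes\mathbb{C}^2\otimes\mathbb{C}^2$ has size exactly four, so the complement is always the maximal completely entangled subspace.

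Granting the size-four classification, I would argue by contradiction. Suppose some three-qubit UPB were reducible. Then one party can apply a nontrivial orthogonality-preserving local measurement that eliminates at least one state while keeping the survivors pairwise orthogonal. Because that measurement acts on a single qubit, each surviving state $\ket{a}\ket{b}\ket{c}$ is sent to another product state, so after the first stage at most three pairwise-orthogonal \emph{product} states remain. Any such small set of orthogonal product states is perfectly distinguishable by LOCC (two orthogonal states always are, and the argument extends to three product states by one further orthogonality-preserving local measurement), exactly as the $\le 4$-state set was distinguishable in the $3\otimes3$ proof of Proposition \ref{prop2}. Composing the initial elimination with this discrimination would perfectly distinguish every member of the original UPB by LOCC, contradicting the LOCC-indistinguishability of UPBs \cite{Bennett99, Divincenzo03, Rinaldis04}. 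Hence no three-qubit UPB can be reducible.

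The step I expect to be the main obstacle is the size input: one must genuinely rule out UPBs of sizes $5,6,7$ (equivalently, product-free complements of dimension $3,2,1$) rather than merely bound the size from below, and one must confirm that the $\le 3$ leftover product states are indeed LOCC-distinguishable. As a fallback if the clean size-four classification is unavailable, I would argue directly in the spirit of the $3\otimes4$ irreducibility proof and the Appendix: using the structure of orthogonality-preserving measurements \cite{Groisman01, Walgate02}, I would show that for each (qubit) party the only local POVM preserving pairwise orthogonality of the UPB members is proportional to the identity, so no state can ever be eliminated and the UPB is irreducible by definition. In that route the difficulty shifts from enumerating sizes to treating all three-qubit UPBs uniformly rather than a single explicit family.
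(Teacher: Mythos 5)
Your proposal is correct and follows essentially the same route as the paper: it rests on the classification that every three-qubit UPB has size exactly four (the paper cites Bravyi and Smolin for this), combines it with the fact that any three or fewer orthogonal fully separable states are LOCC-distinguishable, and derives a contradiction with the LOCC-indistinguishability of UPBs, exactly mirroring the $3\otimes3$ argument of Proposition~\ref{prop2}. The only cosmetic difference is your preliminary lower bound on the UPB size via the maximal dimension of a completely entangled subspace, which the paper does not need since it invokes the size-four classification directly.
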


\begin{proof}
It has been shown that for any three-qubit system, all UPBs are of size four \cite{Bravyi04}. The proof of which is based on the fact that for three qubits, mixed states up to certain ranks are separable \cite{Karnas01}. Furthermore, any set of three or less number of multipartite pure orthogonal fully separable states can be distinguished by LOCC \cite{Divincenzo03}. Hence, we arrive to the above proposition.
\end{proof}

For all other multipartite systems, it is possible to construct reducible UPBs following the same argument as given for the bipartite systems. Remember that the minimum rank of a multipartite BE state can be $4$ and these rank-$4$ multipartite BE states are supported in three-qubit completely entangled subspaces\footnote{A completely entangled subspace does not contain any pure state which is separable across every bipartition \cite{Parthasarathy04}.} (CES) \cite{Chen13}. Using these facts, one can conclude that the maximum size of a multipartite UPB is $(\mathcal{D}-4)$, which produce multipartite BE states of minimum rank, while $\mathcal{D}$ = $d_1d_2\dots d_m$ is the total dimension of a multipartite Hilbert space, $\mathcal{H}$ =  $\mathbb{C}^{d_1}\otimes\mathbb{C}^{d_2}\otimes\cdots\otimes\mathbb{C}^{d_m}$ (as of now, we only use $d_1\otimes d_2\otimes\cdots\otimes d_m$ instead of $\mathbb{C}^{d_1}\otimes\mathbb{C}^{d_2}\otimes\cdots\otimes\mathbb{C}^{d_m}$). Starting from a multipartite real UPB (the states of which are with real normalizing coefficients) of size $(\mathcal{D}-4)$, it is also possible to define a class of multipartite noisy BE states, having ranks of the widest variety, which satisfy the multipartite range criterion. In the following we show explicit constructions of those BE states. Note that in this section we only talk about multipartite systems and thus, by BE states, range criterion, edge states, UPBs etc., we mean these things for multipartite systems only. We mention here that the definition of multipartite edge states and the range criterion for any multipartite system can be given by following the bipartite arguments (for details see Ref.~\cite{Kiem15}). For us, these multipartite definitions become simpler as we stick to real UPBs to construct noisy BE states.

We start with a three-qubit system and construct noisy BE states. Subsequently, we prove that the noisy BE states satisfy the range criterion. Particularly, for three-qubit systems, we consider a result from Ref.~\cite{Divincenzo03}, to prove that the noisy BE states satisfy the range criterion. For the construction, we consider the well know Shift UPB \cite{Bennett99, Divincenzo03} for a three-qubit system. The UPB is given as the following:
\begin{equation}\label{eq8}
\begin{array}{l}
\ket{\Psi_1} = \ket{0}\ket{1}\ket{0-1},~\ket{\Psi_2} = \ket{1}\ket{0-1}\ket{0},~\ket{\Psi_3} = \ket{0-1}\ket{0}\ket{1},~\ket{\Psi_4} = \ket{0+1}\ket{0+1}\ket{0+1}.
\end{array}
\end{equation}
In the above equation, we avoid the normalizing coefficients. The above UPB also has the size $(\mathcal{D}-4)$, here $\mathcal{D}$ is eight. Due to the above UPB, it is possible to get a three-qubit BE state of rank-$4$, which is a normalized projector onto a CES. We define the state as $\rho_{edge}$ = $(1/4)(\mathbb{I}-\sum_{i=1}^4\ket{\widetilde{\Psi_i}}\bra{\widetilde{\Psi_i}})$, where $\mathbb{I}$ is an identity operator, acting on a three-qubit Hilbert space, and $\ket{\widetilde{\Psi_i}}$ is the normalized version of $\ket{\Psi_i}$, $\forall i = 1,\dots,4$. We now present the following theorem:

\begin{theorem}\label{theo1}
Starting from the above UPB, it is possible to construct three-qubit noisy bound entangled states, having ranks five to full rank, which satisfy the range criterion.
\end{theorem}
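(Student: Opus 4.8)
The plan is to realize every target rank $r\in\{5,6,7,8\}$ as a single family of states $\sigma = q\,\delta_{sep}+(1-q)\rho_{edge}$, where $q$ is a small positive number and $\delta_{sep}$ is a separable state built by mixing $r-4$ of the four product projectors $\ketbra{\widetilde{\Psi_i}}{\widetilde{\Psi_i}}$, always including the stopper $\ketbra{\widetilde{\Psi_4}}{\widetilde{\Psi_4}}$. First I would record the rank: the four UPB states are mutually orthogonal and each is orthogonal to the four-dimensional completely entangled subspace (CES) supporting $\rho_{edge}$, so the range of $\sigma$ is the orthogonal direct sum of the CES with the span of the chosen $\ket{\widetilde{\Psi_i}}$, giving $\mathrm{rank}\,\sigma = 4+(r-4)=r$ for every $q\in(0,1)$ and thus ranks five through eight (full rank). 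By construction $\sigma$ is a convex combination of an edge state and a separable state, i.e. a noisy bound entangled state in the sense of Definition~4, once it is shown to be entangled and PPT.

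Two of the three required properties are then immediate. Positivity under every bipartite partial transpose holds because $\sigma$ is a convex combination of the PPT state $\rho_{edge}$ and the separable, hence PPT, state $\delta_{sep}$; no smallness of $q$ is needed for this. Entanglement, however, does \emph{not} follow from the range criterion, which is only a necessary condition for separability, so I would argue it independently: $\rho_{edge}$ is itself a (PPT) entangled state, and since the separable states form a closed set the entangled states form an open set, so a whole neighbourhood of $\rho_{edge}$ consists of entangled states. Because $\sigma-\rho_{edge}=q(\delta_{sep}-\rho_{edge})$, the norm $\|\sigma-\rho_{edge}\|=q\,\|\delta_{sep}-\rho_{edge}\|$ can be pushed below the radius of that neighbourhood by taking $q$ small, so $\sigma$ is entangled for all sufficiently small $q>0$ while its rank stays $r$.

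The substantive step is the range criterion. Since the Shift UPB is a real UPB, each $\ketbra{\widetilde{\Psi_i}}{\widetilde{\Psi_i}}=\ketbra{a_i}{a_i}\otimes\ketbra{b_i}{b_i}\otimes\ketbra{c_i}{c_i}$ is a tensor product of real symmetric local projectors and so is invariant under partial transposition across every bipartition; consequently $\rho_{edge}$, $\delta_{sep}$ and $\sigma$ are all invariant under every partial transpose. This reduces the multipartite range criterion to the single condition that the range of $\sigma$ be spanned by real product states (conjugation then acts trivially and $\sigma^{T_X}=\sigma$ for each bipartition $X$). To supply such states I would invoke the stated property of the UPB in Eq.~(\ref{eq8}): after deleting the stopper $\ket{\widetilde{\Psi_4}}$, the three states $\{\ket{\widetilde{\Psi_1}},\ket{\widetilde{\Psi_2}},\ket{\widetilde{\Psi_3}}\}$ extend to a complete orthogonal product basis by adjoining real product states $\ket{\chi_1},\dots,\ket{\chi_5}$. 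Their span is the orthocomplement of $\mathrm{span}\{\ket{\widetilde{\Psi_1}},\ket{\widetilde{\Psi_2}},\ket{\widetilde{\Psi_3}}\}$, which is five-dimensional and contains both the CES and $\ket{\widetilde{\Psi_4}}$, hence equals $\mathrm{CES}\oplus\mathbb{C}\ket{\widetilde{\Psi_4}}$. Therefore the rank-five range is product-spanned by $\ket{\chi_1},\dots,\ket{\chi_5}$, and the rank-six, -seven, -eight ranges are product-spanned by these together with $\ket{\widetilde{\Psi_1}}$, then also $\ket{\widetilde{\Psi_2}}$, then also $\ket{\widetilde{\Psi_3}}$, the extra product states fed into $\delta_{sep}$; the full-rank case is also clear directly. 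In every case the range is spanned by product states, so $\sigma$ satisfies the range criterion.

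The hard part will not be the range criterion but the entanglement certificate: because we are simultaneously proving that $\sigma$ \emph{satisfies} the range criterion, that criterion cannot be used to detect its entanglement, and the openness argument only yields entanglement for an unquantified small $q$. If an explicit threshold on $q$ is wanted I would have to make ``sufficiently small $q$'' quantitative, or else certify entanglement directly from the edge-state structure of $\rho_{edge}$. The one remaining delicate point is structural: one must verify that the completion $\ket{\chi_1},\dots,\ket{\chi_5}$ of $\{\ket{\widetilde{\Psi_1}},\ket{\widetilde{\Psi_2}},\ket{\widetilde{\Psi_3}}\}$ to a complete orthogonal product basis can indeed be chosen with \emph{real} product states for the three-qubit Shift UPB; this is exactly where the result of Ref.~\cite{Divincenzo03} on completable product bases enters, and checking it for the explicit states of Eq.~(\ref{eq8}) is the single genuinely computational step of the argument.
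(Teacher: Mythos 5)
Your proposal is correct and follows the paper's proof in its overall architecture: the same convex decomposition $\lambda\rho_{sep}+(1-\lambda)\rho_{edge}$ with $\rho_{sep}$ a mixture of $r-4$ of the Shift-UPB projectors, the same rank count via orthogonality of the UPB states to the CES, and essentially the same range-criterion argument --- reduce to real-product-state spanning via PT invariance, then note that the $\leq 3$ leftover UPB states are fully separable and hence completable to a full orthogonal product basis by the result of DiVincenzo \emph{et al.}, so the orthocomplement (which is the range of $\sigma$) is product-spanned. Your version of that last step is somewhat more explicit than the paper's (you identify the completing states $\ket{\chi_1},\dots,\ket{\chi_5}$ and track exactly which product states span each rank-$r$ range), and you are right to flag that the reality of the completion is the one point that deserves explicit verification; the paper glosses over this. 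The genuine divergence is the entanglement certificate: the paper uses the explicit witness $\mathcal{W}=\Pi-\gamma\mathbb{I}$ of Bandyopadhyay \emph{et al.}, which buys a quantitative admissible range $0<\lambda<\gamma$ for the mixing parameter, whereas your topological argument (compactness of the separable set, hence openness of its complement around the entangled state $\rho_{edge}$) is perfectly valid but yields only an unquantified ``sufficiently small $q$.'' You correctly identify this trade-off yourself; for the theorem as stated, either certificate suffices.
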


\begin{proof} 
In order to construct noisy BE states, we consider a fully separable state (noise) $\rho_{sep}$, which is produced by taking any convex combination of the states $\ket{\widetilde{\Psi_i}}$. Thus, $\rho_{sep}$ can have rank-$1$ (when $\rho_{sep}$ is any state of the Shift UPB) to full rank (when $\rho_{sep}$ is produced by taking convex combination of all four states of the Shift UPB). Now, a multipartite noisy BE state is a BE state which can be written as a convex combination of an edge state and a fully separable state (noise). So, we define a class of states, $\varsigma(\lambda)$ = $\lambda\rho_{sep} + (1-\lambda)\rho_{edge}$, $\lambda$ is a small positive number. The inseparability of the states $\varsigma(\lambda)$ can be proved by using a particular witness operator given in Ref.~\cite{Bandyopadhyay05}. The structure of those witness operators are given by $\mathcal{W}$ = $\Pi-\gamma\mathbb{I}$, where $\Pi$ is a projection operator onto the subspace spanned by the states of the Shift UPB, the value of $\gamma$ is fixed by minimizing it over all three-qubit fully separable states (see Ref.~\cite{Bandyopadhyay05, Bandyopadhyay08, Terhal01-1} for a detailed definition), and $\mathbb{I}$ is the identity operator, acting on the three-qubit Hilbert space. For the inseparability of the present noisy BE states, we have to take $0<\lambda<\gamma$. In this way, we construct three-qubit noisy BE states, having rank-$5$ to full rank. This is also the widest variety of ranks, a noisy BE state can have, because any rank-$4$ (minimum rank) BE state is supported in a CES \cite{Chen13}.  

The next important question, we want to explore is whether the noisy BE states, constructed here, satisfy the range criterion. Remember that the present noisy BE states are due to a real UPB, so, they must be invariant under PT with respect to every bipartition. Therefore, to prove that a noisy BE state of the present kind satisfies the range criterion, it is sufficient to show that there exist a set of pure product states (here in this section such states are separable across every bipartition) which span the range of the considered noisy BE state. Note that satisfaction of range criterion is only a necessary condition for separability but not a sufficient condition. To examine the states $\varsigma(\lambda)$, we first consider the rank of $\rho_{sep}$, which is $\leq4$. Then the noisy states have rank five to full rank and the rest of the dimension of the three-qubit Hilbert space (where $\varsigma(\lambda)$ is not supported) is spanned by three or less number of orthogonal pure states which are separable across every bipartition. Again, it is known that any three (or less) pure fully separable multipartite states are extendible to a complete basis in any Hilbert space \cite{Divincenzo03}, implying the fact, one can get sufficient product states in the range of $\varsigma(\lambda)$. Thus, the states $\varsigma(\lambda)$ satisfy the range criterion. 
\end{proof}

However, one can also apply the technique given in Ref.~\cite{Halder19-1}, considering real UPBs, to produce noisy BE states, having rank-$5$ to full rank (ranks of the widest variety), which satisfy the range criterion. This is important particularly, when the number of parties is more than three or the dimensions of the subsystems are greater than two. But in that case, the real UPBs should have the property--if the stopper is removed than the rest of a UPB can be extended to a full basis.  

Like Proposition \ref{prop4}, for multipartite systems also, one can show that a particular edge state can be due to a reducible UPB and also due to an irreducible UPB when $\mathcal{D}>8$. Hence, we present the following proposition: 

\begin{proposition}\label{prop9}
In a multipartite system, the total dimension of which $>8$, consider an edge state which is due to a reducible UPB. The same edge state can also be due to an irreducible UPB. Thus, looking at the edge state, it may not be possible to conclude if the state is due to a reducible UPB or it is due to an irreducible UPB.
\end{proposition}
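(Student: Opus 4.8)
The plan is to mirror the bipartite argument behind Proposition~\ref{prop4}, now exploiting the fact that every rank-$4$ multipartite BE state is supported in a three-qubit CES \cite{Chen13}. First I would fix a concrete rank-$4$ edge state: embed the three-qubit Shift UPB of Eq.~(\ref{eq8}) into a three-qubit subspace of the given Hilbert space (each party contributing a two-dimensional local subspace when its dimension exceeds two), and let $\rho_{edge}$ be the normalized projector onto the $4$-dimensional CES complementary to the span $\mathcal{H}_S^{(8)}$ of these four product states inside the $8$-dimensional block. Any UPB producing this edge state is then a set of fully separable states spanning $\mathcal{H}_S = \mathcal{H}_S^{(8)} \oplus \mathcal{R}$, where $\mathcal{R}$ is the $(\mathcal{D}-8)$-dimensional region orthogonal to the block; since $\rho_{edge} = \tfrac{1}{4}(\mathbb{I} - P_{\mathcal{H}_S})$ depends only on $\mathcal{H}_S$ and not on the chosen product basis, two UPBs spanning the same $\mathcal{H}_S$ yield the same $\rho_{edge}$. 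Hence it suffices to furnish two spanning product sets of size $(\mathcal{D}-4)$ for this fixed $\mathcal{H}_S$, one reducible and one irreducible.

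For the reducible UPB I would take the four Shift states together with $(\mathcal{D}-8)$ ``simple'' fully separable states that tile $\mathcal{R}$, chosen so that one party's local space splits into its two-dimensional block part and the orthogonal complement. That party can then apply the orthogonality-preserving projective measurement $\{\mathcal{P}_{\mathrm{block}}, \mathbb{I}-\mathcal{P}_{\mathrm{block}}\}$, the direct generalization of the pair $\{\mathcal{P}_1,\mathcal{P}_2\}$ used after Eq.~(\ref{eq3}), which deletes the $\mathcal{R}$-states while leaving the remaining states pairwise orthogonal; this exhibits reducibility. Orthogonality of the whole set is routine to check, and unextendibility follows from the rank-deficit argument used for Eqs.~(\ref{eq3})--(\ref{eq4}): the complementary $4$-dimensional space is completely entangled, so it cannot hold enough separable states to complete a full basis, while a BE state cannot have rank below four.

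For the irreducible UPB I would retain the four Shift states but replace the block-product states filling $\mathcal{R}$ by states whose local supports overlap the two-dimensional block subspaces on every party, distributing the added states so that the orthogonality graph does not factor through any single party. Proving irreducibility is the main obstacle. Following the method of Refs.~\cite{Groisman01, Walgate02} and the Appendix computation behind Eq.~(\ref{eq4}), I would show that for each party every orthogonality-preserving measurement, described by a set of local positive operators, is forced to be trivial: requiring that all post-measurement states remain pairwise orthogonal yields linear constraints that pin each local operator to be proportional to the identity on that party. Because the interleaving prevents any party from isolating a proper subset of states, no state can be eliminated, so the set is irreducible. Since both constructions span the identical $\mathcal{H}_S$ and therefore generate the same $\rho_{edge}$, one cannot infer from the edge state alone whether its underlying UPB is reducible or irreducible, which establishes the proposition for every multipartite system with $\mathcal{D}>8$.
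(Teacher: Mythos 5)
Your overall strategy is the same as the paper's: embed the Shift UPB in a three-qubit block, note that the edge state depends only on the subspace $\mathcal{H}_S$ spanned by the UPB, exhibit a reducible UPB by filling the complement of the block with simple product states (deletable by the projective measurement $\{\mathcal{P}_{\mathrm{block}},\mathbb{I}-\mathcal{P}_{\mathrm{block}}\}$), and then produce an irreducible UPB spanning the same $\mathcal{H}_S$. The paper instantiates this concretely in $2\otimes2\otimes3$ with the UPBs of Eqs.~(\ref{eq9}) and (\ref{eq10}).

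The one substantive gap is in your irreducible construction, which is described but never exhibited. You propose to ``replace the block-product states filling $\mathcal{R}$ by states whose local supports overlap the two-dimensional block subspaces on every party,'' but a linear combination of two fully separable states is itself fully separable only when the two states coincide on all tensor factors but one; so the twisting must be done pairwise on states of the reducible UPB that agree on every other party, and this constraint is exactly what limits which overlaps you can create while keeping the set pairwise orthogonal, fully separable, and spanning the same $\mathcal{H}_S$. Without stating this, it is not clear that your recipe produces a valid UPB at all, let alone an irreducible one. The paper's device is precisely this pairwise twist: it replaces $\ket{1}\ket{0-1}\ket{0}$ and $\ket{1}\ket{0-1}\ket{2}$ by $\ket{1}\ket{0-1}\ket{0\pm2}$, which automatically preserves orthogonality and the span, and then irreducibility is checked by the Walgate-type argument you cite. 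Note also that your requirement of overlap ``on every party'' is stronger than needed (and may not be achievable in general): in the $2\otimes2\otimes3$ example only the qutrit party's factors are twisted; the qubit parties are already unable to act nontrivially because the embedded Shift UPB is irreducible on their subspaces. With an explicit twisted set written down, your argument closes and matches the paper's.
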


\begin{proof}
The above can be realized via simple constructions. In $2\otimes2\otimes3$, we consider a reducible UPB first. The states of which are given by-
\begin{equation}\label{eq9}
\begin{array}{l}
\ket{0}\ket{1}\ket{0-1},~\ket{1}\ket{0-1}\ket{0},~\ket{0-1}\ket{0}\ket{1},~\ket{0}\ket{0\pm1}\ket{2},~\ket{1}\ket{0\pm1}\ket{2},~\ket{0+1}\ket{0+1}\ket{0+1}.
\end{array}
\end{equation}
Notice that the above UPB is constructed by extending the Shift UPB to a higher dimension. Clearly, the party who is holding the three-level quantum system (say, Charlie) can perform a simple projective measurement to eliminate the states $\ket{0}\ket{0\pm1}\ket{2}$, $\ket{1}\ket{0\pm1}\ket{2}$, keeping the remaining states orthogonal. Corresponding projectors are $\ket{2}\bra{2}$, $\mathbb{I}-\ket{2}\bra{2}$, $\mathbb{I}$ is the identity operator acting on the three-level quantum system. We next construct an irreducible UPB in the same Hilbert space. For this purpose, we consider the states $\ket{1}\ket{0-1}\ket{0}$, $\ket{1}\ket{0-1}\ket{2}$ and taking linear combinations we produce the states $\ket{1}\ket{0-1}\ket{0\pm2}$. The new UPB is given by-
\begin{equation}\label{eq10}
\begin{array}{l}
\ket{0}\ket{1}\ket{0-1},~\ket{1}\ket{0-1}\ket{0\pm2},~\ket{0-1}\ket{0}\ket{1},~\ket{0}\ket{0\pm1}\ket{2},~\ket{1}\ket{0+1}\ket{2},~\ket{0+1}\ket{0+1}\ket{0+1}.
\end{array}
\end{equation}
In the above, because of the new twisted (on Charlie's side) states $\ket{1}\ket{0-1}\ket{0\pm2}$, Charlie is no longer be able to eliminate any state via orthogonality-preserving LOCC. For more details see the Appendix section. Obviously, the states of both the UPBs span the same subspace because the above UPBs have only one difference and that is, in case of the former UPB the states $\ket{1}\ket{0-1}\ket{0}$, $\ket{1}\ket{0-1}\ket{2}$ are used while in the later UPB the state $\ket{1}\ket{0-1}\ket{0\pm2}$ are used. Again, the states $\ket{1}\ket{0-1}\ket{0\pm2}$ are produced taking the linear combinations of the states $\ket{1}\ket{0-1}\ket{0}$, $\ket{1}\ket{0-1}\ket{2}$. As a result of which both the UPBs produce the same edge state.
\end{proof}

Notice that both the UPBs, given above, can be considered as real UPBs of maximum size and thus, they can be useful to produce noisy BE states, having ranks of the widest variety, which satisfy the range criterion. However, for details regarding the notions--reducibility and irreducibility under orthogonality-preserving LOCC, one can go through the papers \cite{Halder18, Halder19, Zhang19} and the references therein. 

We like to mention here that in Ref.~\cite{Walgate02}, an efficient technique was introduced to check if it is possible for any party to begin with a nontrivial and orthogonality-preserving measurement in order to distinguish a given set of orthogonal states perfectly. Thereafter, in many papers this technique has been used to extract different results. For example, in Ref.~\cite{Halder18}, a class of completable sets of pure orthogonal multipartite product states are introduced from which no state can be eliminated via orthogonality-preserving measurements. Based on the same technique, UPBs are classified in the present work and it is shown that looking at the edge state it may not be possible to infer if it is due to a reducible UPB or due to an irreducible UPBs. We also mention that a class of multipartite bound entangled states was introduced in Ref.~\cite{Halder18} but those bound entangled states are supported in completely entangled subspaces. So, they must violate the range criterion. On the other hand, in the present work we talk about multipartite noisy bound entangled states, having ranks of the widest variety, which satisfy the range criterion.

\section{Conclusion}\label{sec5}
In this work we have considered UPBs of size $(\mathcal{D}-4)$ for both bipartite and multipartite systems, where $\mathcal{D}$ is the total dimension of the corresponding Hilbert space. We have argued that a UPB of the above size is important as they might be useful to generate noisy BE states, having ranks of the widest variety, which satisfy the range criterion. Both reducible and irreducible UPBs are considered in this work and it has been shown that a particular edge state can be due to a reducible UPB and also due to an irreducible UPB, where both the UPBs span the same subspace. We have also given a class of reducible UPBs in $d_1\otimes d_2$, having size ($d_1d_2-4$). Furthermore, we have provided certain irreducible UPBs. A class of bipartite rank-$4$ BE states is also constructed and a few properties of these states are discussed. For bipartite systems, a few properties of noisy BE states are studied further. For future studies, one can consider to develop algorithms to provide irreducible UPBs in different composite Hilbert spaces. 

\section*{Appendix}
We first discuss the technique (in a little more detailed way) to prove the irreducibility of the UPB which is given in Eq.~(\ref{eq4}). We imagine that Alice starts the measurement process. An orthogonality-preserving measurement by Alice can be described by a set of measurement operators $\{E_i^a\}_i$ while $\forall i$ the elements $(E_i^a)^\dagger E_i^a$ are positive Hermitian $3\times3$ matrices. For any value of $i$, we take $(E_i^a)^\dagger E_i^a$ = $[e_{kk^\prime}]$, where $e_{kk^\prime}$ are matrix entries and $k, k^\prime$ = $0,1,2$. We assume that the measurement is orthogonality preserving. So, the post-measurement states remain orthogonal. The post-measurement states corresponding to the states $\ket{0}\ket{0-1}$, $\ket{1}\ket{1-2}$ are $(E_i^a\otimes\mathbb{I})\ket{0}\ket{0-1}$ and $(E_i^a\otimes\mathbb{I})\ket{1}\ket{1-2}$ (normalization is ignored here), $\mathbb{I}$ is the identity operator acting on Bob's Hilbert space. Taking inner product of these two post-measurement states, we get $\bra{0}(E_i^a)^\dagger E_i^a\ket{1}\bra{0-1}\ket{1-2}$ = $0$, i.e., $e_{01}$ = $e_{10}$ = $0$. Similarly, if we take the post-measurement states corresponding to the states $\ket{0}\ket{0-1}$ and $\ket{1-2}\ket{0}$ then after taking the inner product, we get $e_{01} - e_{02}$ = $0$. But $e_{01}$ is already zero and thus, $e_{02}$ = $e_{20}$ = $0$. Following the same technique and taking different combinations of states, it is possible to show that all off-diagonal entries $e_{kk^\prime}$ = $0$, $k, k^\prime$ = $0,1,2$, $k\neq k^\prime$. Again, we take the post-measurement states corresponding to the states $\ket{0-1}\ket{3}$ and $\ket{0+1+2}\ket{0+1+2+3}$ and considering the inner product, we get $e_{00} + e_{01} + e_{02}$ = $e_{10} + e_{11} + e_{12}$. But it is already proved that the off-diagonal entries are zero. So, we get $e_{00}$ = $e_{11}$. Taking other combinations of states, we can show that the diagonal entries are all equal. Clearly, the elements $(E_i^a)^\dagger E_i^a$ are proportional to a $3\times3$ identity matrix, implying the fact that Alice is not able to begin with a nontrivial orthogonality-preserving measurement. This is also sufficient to establish that Alice cannot eliminate any state via orthogonality-preserving LOCC. In the same way, as given here, one can easily prove that Bob also cannot eliminate any state via orthogonality-preserving LOCC. All these arguments, given one by one in the above, complete the proof of irreducibility of the UPB of Eq.~(\ref{eq4}). We now mention an important fact. Even if a given UPB in $\mathbb{C}^3\otimes\mathbb{C}^4$ is reducible, the party whose system dimension is three, cannot eliminate any state via orthogonality-preserving LOCC. Because the reducible UPB in $\mathbb{C}^3\otimes\mathbb{C}^4$, anyway, contains an irreducible UPB in a $\mathbb{C}^3\otimes\mathbb{C}^3$ subspace. However, to prove the irreducibility of the UPB which is given in Eq.~(\ref{eq10}), one can go through the same process which is followed in the above. In fact, we also mention that for any reducible UPB in $\mathbb{C}^2\otimes\mathbb{C}^2\otimes\mathbb{C}^3$, any party whose system dimension is two, cannot eliminate any state via orthogonality-preserving LOCC. Because the reducible UPB in $\mathbb{C}^2\otimes\mathbb{C}^2\otimes\mathbb{C}^3$, anyway, contains an irreducible UPB in a $\mathbb{C}^2\otimes\mathbb{C}^2\otimes\mathbb{C}^2$ subspace.

\section*{References}
\bibliographystyle{elsarticle-num}
\bibliography{ref}

\begin{thebibliography}{10}
\expandafter\ifx\csname url\endcsname\relax
  \def\url#1{\texttt{#1}}\fi
\expandafter\ifx\csname urlprefix\endcsname\relax\def\urlprefix{URL }\fi
\expandafter\ifx\csname href\endcsname\relax
  \def\href#1#2{#2} \def\path#1{#1}\fi

\bibitem{Horodecki09-2}
R.~Horodecki, P.~Horodecki, M.~Horodecki, K.~Horodecki, Quantum entanglement,
  Rev. Mod. Phys. 81 (2009) 865--942.
\newblock \href {https://doi.org/10.1103/RevModPhys.81.865}
  {\path{doi:10.1103/RevModPhys.81.865}}.

\bibitem{Guhne09}
O.~G{\"u}hne, G.~T{\'o}th, Entanglement detection, Phys. Rep. 474 (2009) 1--75.
\newblock \href {https://doi.org/10.1016/j.physrep.2009.02.004}
  {\path{doi:10.1016/j.physrep.2009.02.004}}.

\bibitem{Ekert91}
A.~K. Ekert, Quantum cryptography based on bell's theorem, Phys. Rev. Lett. 67
  (1991) 661--663.
\newblock \href {https://doi.org/10.1103/PhysRevLett.67.661}
  {\path{doi:10.1103/PhysRevLett.67.661}}.

\bibitem{Bennett92}
C.~H. Bennett, S.~J. Wiesner, Communication via one- and two-particle operators
  on einstein-podolsky-rosen states, Phys. Rev. Lett. 69 (1992) 2881--2884.
\newblock \href {https://doi.org/10.1103/PhysRevLett.69.2881}
  {\path{doi:10.1103/PhysRevLett.69.2881}}.

\bibitem{Bennett93}
C.~H. Bennett, G.~Brassard, C.~Cr\'epeau, R.~Jozsa, A.~Peres, W.~K. Wootters,
  Teleporting an unknown quantum state via dual classical and
  einstein-podolsky-rosen channels, Phys. Rev. Lett. 70 (1993) 1895--1899.
\newblock \href {https://doi.org/10.1103/PhysRevLett.70.1895}
  {\path{doi:10.1103/PhysRevLett.70.1895}}.

\bibitem{Bennett96-1}
C.~H. Bennett, G.~Brassard, S.~Popescu, B.~Schumacher, J.~A. Smolin, W.~K.
  Wootters, Purification of noisy entanglement and faithful teleportation via
  noisy channels, Phys. Rev. Lett. 76 (1996) 722--725.
\newblock \href {https://doi.org/10.1103/PhysRevLett.76.722}
  {\path{doi:10.1103/PhysRevLett.76.722}}.

\bibitem{Bennett96-2}
C.~H. Bennett, H.~J. Bernstein, S.~Popescu, B.~Schumacher, Concentrating
  partial entanglement by local operations, Phys. Rev. A 53 (1996) 2046--2052.
\newblock \href {https://doi.org/10.1103/PhysRevA.53.2046}
  {\path{doi:10.1103/PhysRevA.53.2046}}.

\bibitem{Bennett96-3}
C.~H. Bennett, D.~P. DiVincenzo, J.~A. Smolin, W.~K. Wootters, Mixed-state
  entanglement and quantum error correction, Phys. Rev. A 54 (1996) 3824--3851.
\newblock \href {https://doi.org/10.1103/PhysRevA.54.3824}
  {\path{doi:10.1103/PhysRevA.54.3824}}.

\bibitem{Horodecki97}
P.~Horodecki, Separability criterion and inseparable mixed states with positive
  partial transposition, Phys. Lett. A 232~(5) (1997) 333 -- 339.
\newblock \href {https://doi.org/10.1016/S0375-9601(97)00416-7}
  {\path{doi:10.1016/S0375-9601(97)00416-7}}.

\bibitem{Horodecki98}
M.~Horodecki, P.~Horodecki, R.~Horodecki, Mixed-state entanglement and
  distillation: Is there a ``bound'' entanglement in nature?, Phys. Rev. Lett.
  80 (1998) 5239--5242.
\newblock \href {https://doi.org/10.1103/PhysRevLett.80.5239}
  {\path{doi:10.1103/PhysRevLett.80.5239}}.

\bibitem{Peres96}
A.~Peres, Separability criterion for density matrices, Phys. Rev. Lett. 77
  (1996) 1413--1415.
\newblock \href {https://doi.org/10.1103/PhysRevLett.77.1413}
  {\path{doi:10.1103/PhysRevLett.77.1413}}.

\bibitem{Horodecki96}
M.~Horodecki, P.~Horodecki, R.~Horodecki, Separability of mixed states:
  necessary and sufficient conditions, Phys. Lett. A 223 (1996) 1--8.
\newblock \href {https://doi.org/10.1016/S0375-9601(96)00706-2}
  {\path{doi:10.1016/S0375-9601(96)00706-2}}.

\bibitem{DiVincenzo00}
D.~P. DiVincenzo, P.~W. Shor, J.~A. Smolin, B.~M. Terhal, A.~V. Thapliyal,
  Evidence for bound entangled states with negative partial transpose, Phys.
  Rev. A 61 (2000) 062312.
\newblock \href {https://doi.org/10.1103/PhysRevA.61.062312}
  {\path{doi:10.1103/PhysRevA.61.062312}}.

\bibitem{Dur00}
W.~D\"ur, J.~I. Cirac, M.~Lewenstein, D.~Bru\ss{}, Distillability and partial
  transposition in bipartite systems, Phys. Rev. A 61 (2000) 062313.
\newblock \href {https://doi.org/10.1103/PhysRevA.61.062313}
  {\path{doi:10.1103/PhysRevA.61.062313}}.

\bibitem{Pankowski10}
{\L}.~Pankowski, M.~Piani, M.~Horodecki, P.~Horodecki, A few steps more towards
  {NPT} bound entanglement, IEEE Trans. Inf. Theory 56 (2010) 4085--4100.
\newblock \href {https://doi.org/10.1109/TIT.2010.2050810}
  {\path{doi:10.1109/TIT.2010.2050810}}.

\bibitem{Choi75}
M.~D. Choi, Positive semidefinite biquadratic forms, Linear Algebra Appl. 12
  (1975) 95--100.
\newblock \href {https://doi.org/10.1016/0024-3795(75)90058-0}
  {\path{doi:10.1016/0024-3795(75)90058-0}}.

\bibitem{Lewenstein00}
M.~Lewenstein, B.~Kraus, J.~I. Cirac, P.~Horodecki, Optimization of
  entanglement witnesses, Phys. Rev. A 62 (2000) 052310.
\newblock \href {https://doi.org/10.1103/PhysRevA.62.052310}
  {\path{doi:10.1103/PhysRevA.62.052310}}.

\bibitem{Lewenstein01}
M.~Lewenstein, B.~Kraus, P.~Horodecki, J.~I. Cirac, Characterization of
  separable states and entanglement witnesses, Phys. Rev. A 63 (2001) 044304.
\newblock \href {https://doi.org/10.1103/PhysRevA.63.044304}
  {\path{doi:10.1103/PhysRevA.63.044304}}.

\bibitem{Bandyopadhyay05}
S.~Bandyopadhyay, S.~Ghosh, V.~Roychowdhury, Non-full-rank bound entangled
  states satisfying the range criterion, Phys. Rev. A 71 (2005) 012316.
\newblock \href {https://doi.org/10.1103/PhysRevA.71.012316}
  {\path{doi:10.1103/PhysRevA.71.012316}}.

\bibitem{Halder19-1}
S.~Halder, R.~Sengupta, Construction of noisy bound entangled states and the
  range criterion, Phys. Lett. A 383~(17) (2019) 2004 -- 2010.
\newblock \href {https://doi.org/10.1016/j.physleta.2019.04.003}
  {\path{doi:10.1016/j.physleta.2019.04.003}}.

\bibitem{Bennett99}
C.~H. Bennett, D.~P. DiVincenzo, T.~Mor, P.~W. Shor, J.~A. Smolin, B.~M.
  Terhal, Unextendible product bases and bound entanglement, Phys. Rev. Lett.
  82 (1999) 5385--5388.
\newblock \href {https://doi.org/10.1103/PhysRevLett.82.5385}
  {\path{doi:10.1103/PhysRevLett.82.5385}}.

\bibitem{Divincenzo03}
D.~P. DiVincenzo, T.~Mor, P.~W. Shor, J.~A. Smolin, B.~M. Terhal, Unextendible
  product bases, uncompletable product bases and bound entanglement, Commun.
  Math. Phys. 238 (2003) 379--410.
\newblock \href {https://doi.org/10.1007/s00220-003-0877-6}
  {\path{doi:10.1007/s00220-003-0877-6}}.

\bibitem{Rinaldis04}
S.~De~Rinaldis, Distinguishability of complete and unextendible product bases,
  Phys. Rev. A 70 (2004) 022309.
\newblock \href {https://doi.org/10.1103/PhysRevA.70.022309}
  {\path{doi:10.1103/PhysRevA.70.022309}}.

\bibitem{Bennett99-1}
C.~H. Bennett, D.~P. DiVincenzo, C.~A. Fuchs, T.~Mor, E.~Rains, P.~W. Shor,
  J.~A. Smolin, W.~K. Wootters, Quantum nonlocality without entanglement, Phys.
  Rev. A 59 (1999) 1070--1091.
\newblock \href {https://doi.org/10.1103/PhysRevA.59.1070}
  {\path{doi:10.1103/PhysRevA.59.1070}}.

\bibitem{Chen15}
J.~Chen, N.~Johnston, The minimum size of unextendible product bases in the
  bipartite case (and some multipartite cases), Commun. Math. Phys. 333~(1)
  (2015) 351--365.
\newblock \href {https://doi.org/10.1007/s00220-014-2186-7}
  {\path{doi:10.1007/s00220-014-2186-7}}.

\bibitem{Sentis18}
G.~Sent\'{i}s, J.~N. Greiner, J.~Shang, J.~Siewert, M.~Kleinmann, Bound
  entangled states fit for robust experimental verification, Quantum 2 (2018)
  113.
\newblock \href {https://doi.org/10.22331/q-2018-12-18-113}
  {\path{doi:10.22331/q-2018-12-18-113}}.

\bibitem{Horodecki05}
K.~Horodecki, M.~Horodecki, P.~Horodecki, J.~Oppenheim, Secure key from bound
  entanglement, Phys. Rev. Lett. 94 (2005) 160502.
\newblock \href {https://doi.org/10.1103/PhysRevLett.94.160502}
  {\path{doi:10.1103/PhysRevLett.94.160502}}.

\bibitem{Horodecki08}
K.~Horodecki, {\L}.~Pankowski, M.~Horodecki, P.~Horodecki, Low-dimensional
  bound entanglement with one-way distillable cryptographic key, IEEE Trans.
  Inf. Theory 54 (2008) 2621--2625.
\newblock \href {https://doi.org/10.1109/TIT.2008.921709}
  {\path{doi:10.1109/TIT.2008.921709}}.

\bibitem{Horodecki09-1}
K.~Horodecki, M.~Horodecki, P.~Horodecki, J.~Oppenheim, General paradigm for
  distilling classical key from quantum states, IEEE Trans. Inf. Theory 55
  (2009) 1898--1929.
\newblock \href {https://doi.org/10.1109/TIT.2008.2009798}
  {\path{doi:10.1109/TIT.2008.2009798}}.

\bibitem{Moroder14}
T.~Moroder, O.~Gittsovich, M.~Huber, O.~G\"uhne, Steering bound entangled
  states: A counterexample to the stronger peres conjecture, Phys. Rev. Lett.
  113 (2014) 050404.
\newblock \href {https://doi.org/10.1103/PhysRevLett.113.050404}
  {\path{doi:10.1103/PhysRevLett.113.050404}}.

\bibitem{Vertesi14}
T.~V\'{e}rtesi, N.~Brunner, Disproving the peres conjecture by showing bell
  nonlocality from bound entanglement, Nat. Commun. 5 (2014) 5297.
\newblock \href {https://doi.org/10.1038/ncomms6297}
  {\path{doi:10.1038/ncomms6297}}.

\bibitem{Yu17}
S.~Yu, C.~H. Oh, Family of nonlocal bound entangled states, Phys. Rev. A 95
  (2017) 032111.
\newblock \href {https://doi.org/10.1103/PhysRevA.95.032111}
  {\path{doi:10.1103/PhysRevA.95.032111}}.

\bibitem{Toth18}
G.~T\'{o}th, T.~V\'{e}rtesi, Quantum states with a positive partial transpose
  are useful for metrology, Phys. Rev. Lett. 120 (2018) 020506.
\newblock \href {https://doi.org/10.1103/PhysRevLett.120.020506}
  {\path{doi:10.1103/PhysRevLett.120.020506}}.

\bibitem{Horodecki03-1}
P.~Horodecki, J.~A. Smolin, B.~M. Terhal, A.~V. Thapliyal, Rank two bipartite
  bound entangled states do not exist, Theor. Comput. Sci. 292 (2003) 589--596.
\newblock \href {https://doi.org/10.1016/S0304-3975(01)00376-0}
  {\path{doi:10.1016/S0304-3975(01)00376-0}}.

\bibitem{Chen08}
L.~Chen, Y.-X. Chen, Rank-three bipartite entangled states are distillable,
  Phys. Rev. A 78 (2008) 022318.
\newblock \href {https://doi.org/10.1103/PhysRevA.78.022318}
  {\path{doi:10.1103/PhysRevA.78.022318}}.

\bibitem{Chen11}
L.~Chen, D.~{\v{Z}}. {\DJ}okovi\'c, Description of rank four entangled states
  of two qutrits having positive partial transpose, J. Math. Phys. 52~(12)
  (2011) 122203.
\newblock \href {https://doi.org/10.1063/1.3663837}
  {\path{doi:10.1063/1.3663837}}.

\bibitem{Chen13}
L.~Chen, D.~{\v{Z}}. {\DJ}okovi{\'{c}}, Separability problem for multipartite
  states of rank at most 4, J. Phys. A: Math. Theor. 46~(27) (2013) 275304.
\newblock \href {https://doi.org/10.1088/1751-8113/46/27/275304}
  {\path{doi:10.1088/1751-8113/46/27/275304}}.

\bibitem{Halder19-3}
S.~Halder, M.~Banik, S.~Ghosh, Family of bound entangled states on the boundary
  of the peres set, Phys. Rev. A 99 (2019) 062329.
\newblock \href {https://doi.org/10.1103/PhysRevA.99.062329}
  {\path{doi:10.1103/PhysRevA.99.062329}}.

\bibitem{Groisman01}
B.~Groisman, L.~Vaidman, Nonlocal variables with product-state eigenstates, J.
  Phys. A: Math. Gen. 34 (2001) 6881.
\newblock \href {https://doi.org/10.1088/0305-4470/34/35/313}
  {\path{doi:10.1088/0305-4470/34/35/313}}.

\bibitem{Walgate02}
J.~Walgate, L.~Hardy, Nonlocality, asymmetry, and distinguishing bipartite
  states, Phys. Rev. Lett. 89 (2002) 147901.
\newblock \href {https://doi.org/10.1103/PhysRevLett.89.147901}
  {\path{doi:10.1103/PhysRevLett.89.147901}}.

\bibitem{Clarisse06}
L.~Clarisse, Construction of bound entangled edge states with special ranks,
  Phys. Lett. A 359~(6) (2006) 603 -- 607.
\newblock \href {https://doi.org/10.1016/j.physleta.2006.07.045}
  {\path{doi:10.1016/j.physleta.2006.07.045}}.

\bibitem{Ha07}
K.-C. Ha, Comment on: ``$\mbox{C}$onstruction of bound entangled edge states
  with special ranks'' [$\mbox{P}$hys. $\mbox{L}$ett. $\mbox{A}$ 359 (2006)
  603], Phys. Lett. A 361~(6) (2007) 515 -- 519.
\newblock \href {https://doi.org/10.1016/j.physleta.2006.11.009}
  {\path{doi:10.1016/j.physleta.2006.11.009}}.

\bibitem{Kye12}
S.-H. Kye, H.~Osaka, Classification of bi-qutrit positive partial transpose
  entangled edge states by their ranks, J. Math. Phys. 53~(5) (2012) 052201.
\newblock \href {https://doi.org/10.1063/1.4712302}
  {\path{doi:10.1063/1.4712302}}.

\bibitem{Shi20}
F.~Shi, X.~Zhang, L.~Chen, Unextendible product bases from tile structures and
  their local entanglement-assisted distinguishability, Phys. Rev. A 101 (2020)
  062329.
\newblock \href {https://doi.org/10.1103/PhysRevA.101.062329}
  {\path{doi:10.1103/PhysRevA.101.062329}}.

\bibitem{Bravyi04}
S.~B. Bravyi, Unextendible product bases and locally unconvertible bound
  entangled states, Quant. Info. Proc. 3 (2004) 309--329.
\newblock \href {https://doi.org/10.1007/s11128-004-7076-z}
  {\path{doi:10.1007/s11128-004-7076-z}}.

\bibitem{Karnas01}
S.~Karnas, M.~Lewenstein, Separability and entanglement in
  ${C}^{2}\ensuremath{\bigotimes}{C}^{2}\ensuremath{\bigotimes}{C}^{N}$
  composite quantum systems, Phys. Rev. A 64 (2001) 042313.
\newblock \href {https://doi.org/10.1103/PhysRevA.64.042313}
  {\path{doi:10.1103/PhysRevA.64.042313}}.

\bibitem{Parthasarathy04}
K.~R. Parthasarathy, On the maximal dimension of a completely entangled
  subspace for finite level quantum systems, Proc. Math. Sci. 114 (2004)
  365--374.
\newblock \href {https://doi.org/10.1007/BF02829441}
  {\path{doi:10.1007/BF02829441}}.

\bibitem{Kiem15}
Y.-H. Kiem, S.-H. Kye, J.~Na, Product vectors in the ranges of multi-partite
  states with positive partial transposes and permanents of matrices, Commun.
  Math. Phys. 338 (2015) 621--639.
\newblock \href {https://doi.org/10.1007/s00220-015-2385-x}
  {\path{doi:10.1007/s00220-015-2385-x}}.

\bibitem{Bandyopadhyay08}
S.~Bandyopadhyay, S.~Ghosh, V.~Roychowdhury, Robustness of entangled states
  that are positive under partial transposition, Phys. Rev. A 77 (2008) 032318.
\newblock \href {https://doi.org/10.1103/PhysRevA.77.032318}
  {\path{doi:10.1103/PhysRevA.77.032318}}.

\bibitem{Terhal01-1}
B.~M. Terhal, A family of indecomposable positive linear maps based on
  entangled quantum states, Linear Algebra Appl. 323 (2001) 61--73.
\newblock \href {https://doi.org/10.1016/S0024-3795(00)00251-2}
  {\path{doi:10.1016/S0024-3795(00)00251-2}}.

\bibitem{Halder18}
S.~Halder, Several nonlocal sets of multipartite pure orthogonal product
  states, Phys. Rev. A 98 (2018) 022303.
\newblock \href {https://doi.org/10.1103/PhysRevA.98.022303}
  {\path{doi:10.1103/PhysRevA.98.022303}}.

\bibitem{Halder19}
S.~Halder, M.~Banik, S.~Agrawal, S.~Bandyopadhyay, Strong quantum nonlocality
  without entanglement, Phys. Rev. Lett. 122 (2019) 040403.
\newblock \href {https://doi.org/10.1103/PhysRevLett.122.040403}
  {\path{doi:10.1103/PhysRevLett.122.040403}}.

\bibitem{Zhang19}
Z.-C. Zhang, X.~Zhang, Strong quantum nonlocality in multipartite quantum
  systems, Phys. Rev. A 99 (2019) 062108.
\newblock \href {https://doi.org/10.1103/PhysRevA.99.062108}
  {\path{doi:10.1103/PhysRevA.99.062108}}.

\end{thebibliography}
\end{document}